\newcommand*{\Scale}[2][4]{\scalebox{#1}{$#2$}}%
\newtheorem{theorem}{Theorem}
\newtheorem{lemma}{Lemma}
\newtheorem{remark}{Remark}
\newtheorem{corollaryLemma}{Corollary}[lemma]
\newtheorem{assumption}{Assumption}
\newtheorem{definition}{Definition}
\titlespacing{\section}{0pt}{2ex}{1ex}
\titlespacing{\subsection}{0pt}{1ex}{0ex}
\titlespacing{\subsubsection}{0pt}{0.5ex}{0ex}
\providecommand{\customgenericname}{}
\newcommand{\newcustomtheorem}[2]{%
  \newenvironment{#1}[1]
  {%
   \renewcommand\customgenericname{#2}%
   \renewcommand\theinnercustomgeneric{##1}%
   \innercustomgeneric
  }
  {\endinnercustomgeneric}
}
\newcommand*{\set}{\fontfamily{qag}\selectfont}
\DeclareTextFontCommand{\textset}{\set}
\begin{document}

\setboolean{extend_v}{true} 

\title{Derandomizing Codes for the Binary Adversarial Wiretap Channel of Type II \\
{\thanks{This work is supported in part by the U.S. National Science Foundation under grants CNS-2128448, CNS-2212565, CNS-2225577, EEC-1941529, ITE-2226447 and by the Office of Naval Research under grant ONR N000142112472.}}}

\author{\IEEEauthorblockN{Eric Ruzomberka\IEEEauthorrefmark{1}, Homa Nikbakht\IEEEauthorrefmark{1}, Christopher G. Brinton\IEEEauthorrefmark{2}, David J. Love\IEEEauthorrefmark{2} and H. Vincent Poor\IEEEauthorrefmark{1}}
\IEEEauthorblockA{ \IEEEauthorrefmark{1}\textit{Princeton University \IEEEauthorrefmark{2}Purdue University }}
}

\maketitle

\thispagestyle{plain}
\pagestyle{plain}

\begin{abstract}
We revisit the binary adversarial wiretap channel (AWTC) of type II in which an active adversary can read a fraction $r$ and flip a fraction $p$ of codeword bits. The semantic-secrecy capacity of the AWTC II is partially known, where the best-known lower bound is non-constructive, proven via a random coding argument that uses a large number (that is exponential in blocklength $n$) of random bits to seed the random code. In this paper, we establish a new derandomization result in which we match the best-known lower bound of $1-H_2(p)-r$ where $H_2(\cdot)$ is the binary entropy function via a random code that uses a small seed of only $O(n^2)$ bits. Our random code construction is a novel application of \textit{pseudolinear codes} -- a class of non-linear codes that have $k$-wise independent codewords when picked at random where $k$ is a design parameter. As the key technical tool in our analysis, we provide a soft-covering lemma in the flavor of Goldfeld, Cuff and Permuter (Trans. Inf. Theory 2016) that holds for random codes with $k$-wise independent codewords. 
\end{abstract}

 \section{Introduction}

 Consider a communication setting in which a sender Alice wishes to communicate a message to a receiver Bob by sending a sequence of bits over a noisy wiretap channel. The channel is controlled by an (active) adversary who can both read a fraction $r \in [0,1]$ and flip a fraction $p \in [0,1]$ of Alice’s transmitted bits. In this setting, Alice’s and Bob’s communication goal under any adversarial strategy is two-fold:
 \begin{enumerate}
\item	(\textit{Reliability}) Bob must decode Alice’s message with small probability of error.
\item	(\textit{Secrecy}) The adversary must extract negligible information of the Alice's message via its observation of Alice's sequence.
\end{enumerate}
Critically, we make no assumptions about the adversary’s computational limitations, and thus, secrecy must be guaranteed in an information theoretic sense by ``hiding'' the message in the adversary's bit-limited observation. Furthermore, the adversary may choose the location of the bit reads and bit flips in an arbitrary manner using knowledge of Alice and Bob communication scheme. In the literature, the above setting is known as the binary adversarial wiretap channel of type II (denoted as $(p,r)$-AWTC II) \cite{Ozarow1986,Wang2016_2}.

Much is known about the fundamental limits of communication over the $(p,r)$-AWTC II. Roughly defined, the secrecy capacity of the $(p,r)$-AWTC II is the largest rate at which Alice and Bob can communicate while meeting the above goals under a given secrecy measure. The measure we focus on is semantic secrecy (SS) \cite{Goldwasser1984,Bellare1997}, which is widely recognized as the cryptographic gold standard for evaluating secrecy \cite{Bellare2012}. The SS capacity, denoted $C(p,r)$, is partially known where the best-known lower bound \cite{Wang2016} and upper bound \cite{Wang2016,Ozel2013} are
\begin{equation} \label{eq:SSC_ulb}
\max \{1-H_2(p) - r,0 \} \leq C(p,r) \leq 1-H_2(p) - r - \min\limits_{x \in [0,1]} f(x)
\end{equation}
where $H_2(\cdot)$ is the binary entropy function and $f(x) = H_2((2p-1)x+1-p) - H_2(p) - rH_2(x)$. Note that the two bounds are close for small $r$ and tight for $p=0$.

While the limits of communication over the $(p,r)$-AWTC II are mostly understood, less is known on how to construct efficient codes to achieve these limits. The proof of the lower bound (\ref{eq:SSC_ulb}), as given in \cite{Wang2016}, is non-constructive and follows an ordinary random coding argument in which codewords are chosen uniformly and \textit{independently} from space $\{0,1 \}^n$ where $n$ is the blocklength of the code. As a tool for probabilistic constructions, the practical use of this random code distribution is limited. For example, to represent a code picked in this way, one would need to remember at least $n 2^{Rn}$ random bits where $R$ is the coding rate.\footnote{Additional random (seed) bits are needed if one considers codes with private randomness at the encoder (i.e., stochastic codes).} Thus, codes picked from a distribution with mutual independence property lack a succinct representation. Furthermore, the high degree of randomness used in the construction obscures insight into the structure of a good code. Without sufficient structure, efficient encoding and decoding algorithms are likely to be elusive.   

In this paper, we work towards an efficient code construction for the $(p,r)$-AWTC II by partially derandomizing the random code used in \cite{Wang2016} to establish the lower bound (\ref{eq:SSC_ulb}). We do so by relaxing the requirement that codewords be mutually independent and consider random codes with $k$-wise independent codewords for some positive integer $k << n$. We show that random codes under this weaker notation of independence can achieve the lower bound (\ref{eq:SSC_ulb}) for some parameter $k$ large enough but constant in $n$. As a result, these codes have both a more succinct representation and additional structure compared to random codes with mutually independent codewords. 

The approach we take is the following. We focus on a class of non-linear codes known as \textit{pseudolinear codes} (precisely defined in Section \ref{sec:results}), which was initially proposed by Guruswami and Indyk \cite{Guruswami2001_2} outside of the AWTC setting. In the AWTC setting, pseudolinear codes have a number of nice properties, including succinct representations (i.e., $O(k n^2)$ bits), efficient encoding algorithms, some linear structure, and $k$-wise independent codewords when chosen at random for a designable parameter $k$. We initiate the study of pseudolinear codes for achieving both secrecy and reliability in the wiretap setting. As our main result, we show that random pseudolinear codes achieve the best-known SS capacity lower bound (\ref{eq:SSC_ulb}). Conversely, we show that non-linear codes are \textit{necessary} to achieve this lower bound for some values of $p$ and $r$. To prove our main result, we provide a new lemma on the soft-covering phenomenon \cite{Wyner1975,Goldfeld2016} under random coding with $k$-wise independent codewords.

 \section{Preliminaries, Results \& Related Work}

 \subsection{Notation}

 Unless stated otherwise, we denote random variables in uppercase font (e.g., $X$), realizations of random variables in lowercase font (e.g., $x$), and sequences in bold font (e.g., $\bm{X}$, $\bm{x}$). An exception to the above rules occurs when we denote codes: we denote random codes with script typeset (e.g., $\mathscr{C}$) and realizations of random codes with calligraphic typeset (e.g., $\mathcal{C}$). We denote the set of all possible distributions over a set $\mathcal{X}$ as $\mathcal{P}(\mathcal{X})$, and denote the uniform distribution over $\mathcal{X}$ as $\mathrm{Unif}(\mathcal{X})$. We denote that $X$ is distributed as $P \in \mathcal{P}(\mathcal{X})$ by writing $X \sim P$. For PMFs $P$ and $Q$ such that $\mathrm{supp}(P) \subseteq \mathrm{supp}(Q)$ (absolute continuity), the relative entropy of $P$ and $Q$ is
 $D(P||Q) \triangleq \sum_{x \in \mathrm{supp}(P)} P(x) \log_2 \frac{P(x)}{Q(x)}$. For $\alpha>0$ and $\alpha \neq 1$, the R\'{e}nyi divergence of order $\alpha$ is
 $D_{\alpha}(P||Q) \triangleq \frac{1}{\alpha-1} \log_2 \sum_{x \in \mathrm{supp}(P)} P(x) (\frac{P(x)}{Q(x)})^{\alpha-1}$. Define the special case $D_1(P||Q) \triangleq \lim_{\alpha \rightarrow 1} D_\alpha(P||Q) = D(P||Q)$. For an event $\mathcal{A}$, we let $\mathds{1} \{ \mathcal{A} \}$ denote the indicator of $\mathcal{A}$.

 \subsection{Setup} \label{sec:setup}

 \textit{Code:} A (binary) code $\mathcal{C}_n$ of blocklength $n$ is a subset of $\{0,1\}^n$. We will associate a code $\mathcal{C}_n$ with an encoding function $\bm{x}(\cdot)$, which performs a mapping from the message space $\mathcal{M}$ to codewords in $\{0,1\}^n$. As is common for wiretap codes, we will consider \textit{stochastic encoding} in which $\bm{x}$ takes as argument a private random key $w \in \mathcal{W}$ that is known only to Alice. Specifically, for a message rate $R = \frac{\log_2 |\mathcal{M}|}{n}$ and a (private) key rate $R' = \frac{\log_2 |\mathcal{W}|}{n}$, an $[n,R n,R' n]$ code $\mathcal{C}_n$ is a set $$\mathcal{C}_n = \{ \bm{x}(m,w): (m,w) \in \mathcal{M} \times \mathcal{W}\}$$ where we refer to $\bm{x}(w,m)$ as the ($n$-bit) codeword corresponding to message $m$ and key $w$. In turn, a \textit{family} of codes is a sequence $\{\mathcal{C}_n\}_{n=1}^{\infty}$ where for each $n\geq1$, $\mathcal{C}_n$ is an $[n,Rn,R'n]$ code.

 \textit{Encoding/Decoding:} For an $[n,Rn,R'n]$ code $\mathcal{C}_n$, probability mass function (PMF) $P_M \in \mathcal{P}(\mathcal{M})$, a message $M \sim P_M$ and a private key $W \sim \mathrm{Unif}(\mathcal{W})$ where $M$ and $W$ are independent, Alice encodes $M$ into a codeword $\bm{x}(M,W)$ and transmits it over the channel. Subsequently, Bob receives a corrupted version of the codeword and performs decoding by choosing a message estimate $\widehat{M} \in \mathcal{M}$. We say that a decoding error occurs if $\widehat{M} \neq M$.

 \textit{The AWTC II:}
 For a read fraction $r \in [0,1]$ and an error fraction $p \in [0,1/2]$, the adversary can observe $rn$ bits and flip up to $pn$ bits of $\bm{x}(M,W)$. The location of the read bits are indexed by a coordinate set $\mathcal{S}$, which the adversary can choose from the set $\mathscr{S}$ consisting of all subsets of $[n]$ of size $rn$. In turn, the adversary observes $\bm{Z} = \bm{x}(M,W,\mathcal{S})$ where $\bm{x}(M,W,\mathcal{S})$ denotes the $rn$ bits of $\bm{x}(M,W)$ indexed at $\mathcal{S}$, and subsequently, chooses the location of the bit flips. We emphasize that the location of the bit flips need not coincide with $\mathcal{S}$. In general, the adversary can randomize its above choices by choosing a distribution on $\mathcal{S}$ that can depend on the code, as well as a distribution on the bit flip locations that can depend on both the code and the observation $\bm{Z}$.
 
 \textit{Secrecy:} Define the semantic leakage as
 \begin{equation}
 \mathrm{Sem}(\mathcal{C}_n) = \max_{P_M \in P(\mathcal{M}), \mathcal{S} \in \mathscr{S}} I_\mathcal{S}(M;\bm{Z})
 \end{equation}
 where $I_{\mathcal{S}}(M;Z)$ denotes the mutual information between $M \sim P_M$ and $\bm{Z} = \bm{x}(M,W,\mathcal{S})$. In turn, a family of codes $\{\mathcal{C}_n\}_{n=1}^{\infty}$ is said to be \textit{semantically-secret} if $\mathrm{Sem}(\mathcal{C}_n) = 2^{-\Omega(n)}$. We remark that this mutual-information based notation of SS is shown in \cite{Bellare2012} to be (asymptotically) equivalent to the operational definition of SS given in \cite{Goldwasser1984,Bellare1997}. Further, SS is a stronger notation of secrecy than strong secrecy.\footnote{A family of codes is said to achieve strong secrecy if $\lim_{n \rightarrow \infty} \max_{\mathcal{S} \in \mathscr{S}}I_{\mathcal{S}}(M;\bm{Z})=0$ where the message distribution is fixed s.t. $P_M \sim \mathrm{Unif}(\mathcal{M})$.}

 \textit{Reliability:} The (maximum) probability of decoding error is defined as
 $$P^{\mathrm{max}}_{\mathrm{error}}(\mathcal{C}_n) = \max_{m \in \mathcal{M}} \mathbb{P}\left( \widehat{M} \neq m | M =m \right)$$ where the probability is taken w.r.t. the distribution of Alice's key and the \textit{worst-case} distribution of the adversary's bit read/flip locations. A family of codes $\{\mathcal{C}_n\}_{n=1}^{\infty}$ is said to be \textit{reliable} if for any $\delta > 0$,
 $P_{\mathrm{error}}(\mathcal{C}_n) \leq \delta$ for large enough $n$. 
 
 \textit{SS Capacity:} The rate $R>0$ is said to be achievable over the $(p,r)$-AWTC II if there exists a family of codes $\{\mathcal{C}_n\}_{n=1}^{\infty}$ (where for each $n$, $\mathcal{C}_n$  is an $[n,Rn,R'n]$ code for some $R'\geq 0$) that is both semantically-secret and reliable. The SS capacity $C(p,r)$ is the supremum of rates achievable over the $(p,r)$-AWTC II.

 \subsection{Results} \label{sec:results}

 Our first result is on the necessity of non-linear codes for achieving the SS capacity. We say that a $[n,Rn,R'n]$ code $\mathcal{C}_n$ is \textit{linear}\footnote{Examples of linear codes in the wiretap setting include Ozarow's and Wyner's linear coset coding scheme \cite{Ozarow1986} and some polar code and LDPC code based schemes (e.g., \cite{Mahdavifar2011}).} if there exists a generator matrix $G \in \{0,1\}^{(R+R')n \times n}$ such that the codeword corresponding to any message $m \in \mathcal{M} \triangleq \{0,1\}^{Rn}$ and key $w \in \mathcal{W} \triangleq \{0,1\}^{R'n}$ is $\bm{x}(m,w) = \begin{bmatrix} m & w \end{bmatrix}G$. A corollary of the following Theorem is that for any $r \in (0,1]$ and $p=0$ (i.e., the channel to Bob is noiseless), linear codes cannot achieve SS capacity $C(0,r) = \max \{ 1 - r,0 \}$.

 \begin{theorem} \label{thm:linear_codes}
 Let $p =0$, $r \in (0,1]$, $R > \max\{1-2r,0\}$ and $R' \in [0,1-R]$. For large enough $n$, every linear $[n,Rn,R'n]$ code $\mathcal{C}_n$ has either semantic leakage $\mathrm{Sem}(\mathcal{C}_n) \geq 1$ or probability of error $P_{\mathrm{error}}(\mathcal{C}_n) \geq 1/2$ over the $(0,r)$-AWTC II.
 \end{theorem}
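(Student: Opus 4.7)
The plan is to prove the contrapositive: if both $\mathrm{Sem}(\mathcal{C}_n) < 1$ and $P_{\mathrm{error}}(\mathcal{C}_n) < 1/2$ hold, then $R \leq \max\{1-2r,0\} + o(1)$ as $n \to \infty$, contradicting $R > \max\{1-2r,0\}$ for sufficiently large $n$. Write $G = \left[\begin{smallmatrix}G_1 \\ G_2\end{smallmatrix}\right]$ and set $\mathcal{C}_\mathrm{mess} = \mathrm{rowspan}(G_1)$, $\mathcal{C}_\mathrm{key} = \mathrm{rowspan}(G_2)$. Since $p=0$, Bob receives $\bm{x}(M,W) = MG_1 + WG_2$ noiselessly; I would first use $P_\mathrm{error} < 1/2$ to show that if there exist $m_1 \neq m_2$ with $(m_1-m_2)G_1 \in \mathcal{C}_\mathrm{key}$, then because $W$ is uniform the conditional distribution of $\bm{x}(M,W)$ is identical under $M=m_1$ and $M=m_2$, which would force $P_\mathrm{error} \geq 1/2$. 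Hence the reliability assumption forces $G_1$ to be injective and $\mathcal{C}_\mathrm{mess} \cap \mathcal{C}_\mathrm{key} = \{0\}$, so $\dim \mathcal{C}_n = Rn + \dim \mathcal{C}_\mathrm{key}$.

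Next, I would exploit the secrecy assumption to get an algebraic condition. For uniform $M$ and any $\mathcal{S}$ of size $rn$, writing $\bm{Z} = MG_{1,\mathcal{S}} + WG_{2,\mathcal{S}}$ and using $H(\bm{Z}) = \mathrm{rank}(G_\mathcal{S})$ together with $H(\bm{Z}\mid M) = \mathrm{rank}(G_{2,\mathcal{S}})$ yields $I(M;\bm{Z}) = \mathrm{rank}(G_\mathcal{S}) - \mathrm{rank}(G_{2,\mathcal{S}})$, a non-negative integer. Since $\mathrm{Sem}(\mathcal{C}_n) \geq I(M;\bm{Z})$ when $M$ is uniform, the hypothesis $\mathrm{Sem}(\mathcal{C}_n) < 1$ forces this integer to be zero, i.e.\ $\mathrm{rowspan}(G_{1,\mathcal{S}}) \subseteq \mathrm{rowspan}(G_{2,\mathcal{S}})$ for every $|\mathcal{S}| = rn$.

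The key and hardest step is converting this rank equality into a rate bound via a dual-code argument. With $V_\mathcal{T} = \{\bm{v}\in\{0,1\}^n : \mathrm{supp}(\bm{v}) \subseteq \mathcal{T}\}$, the secrecy condition is equivalent to $\mathcal{C}_n \subseteq \bigcap_{|\mathcal{T}|=(1-r)n}(\mathcal{C}_\mathrm{key} + V_\mathcal{T})$. Dualizing via $(\mathcal{C}_\mathrm{key} + V_\mathcal{T})^\perp = \mathcal{C}_\mathrm{key}^\perp \cap V_{[n]\setminus \mathcal{T}}$, the subspace $\mathcal{L} \subseteq \mathcal{C}_\mathrm{key}^\perp$ spanned by codewords of $\mathcal{C}_\mathrm{key}^\perp$ of Hamming weight at most $rn$ satisfies $\dim \mathcal{L} \leq n - \dim \mathcal{C}_n$. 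Hence any complement $\mathcal{E}$ of $\mathcal{L}$ in $\mathcal{C}_\mathrm{key}^\perp$ has $\dim \mathcal{E} \geq Rn$ and, by construction, contains no nonzero vector of weight $\leq rn$; that is, $\mathcal{E}$ is a binary linear code of blocklength $n$, dimension at least $Rn$, and minimum distance at least $rn + 1$. Applying the Plotkin bound to $\mathcal{E}$ (combined with the standard shortening argument when $rn+1 \leq n/2$, or applied directly when $rn+1 > n/2$) gives $\dim \mathcal{E} \leq (1-2r)n + O(\log n)$, so $R \leq 1-2r + o(1)$, contradicting $R > \max\{1-2r,0\}$ for large $n$. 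The main obstacle is this last conversion: translating the per-$\mathcal{S}$ rank condition into a clean sub-code of $\mathcal{C}_\mathrm{key}^\perp$ whose parameters fall within reach of a Plotkin-type inequality for binary codes.
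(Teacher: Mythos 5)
Your proposal is correct, and while it rests on the same two pillars as the paper's proof -- the identity $I_{\mathcal{S}}(M;\bm{Z})=\mathrm{rank}(G(\mathcal{S}))-\mathrm{rank}(G_W(\mathcal{S}))$ for uniform $M$ and a Plotkin bound applied to a dual of the key code -- the architecture is genuinely different. The paper argues constructively: it assumes w.l.o.g.\ that $G$ is full rank (Assumption \ref{ass:rank}), disposes of the case $R+R'\leq r$ separately (Corollary \ref{thm:Rlr}), has the adversary pick an information set $\mathcal{V}$ of $G$ and then a subset $\mathcal{S}^*\subseteq\mathcal{V}$ of size $rn$ minimizing $\mathrm{rank}(G_W(\mathcal{S}^*))$, and applies the Plotkin bound to the $[(R+R')n,Rn]$ dual $\Psi^\perp$ of the punctured key code generated by $G_W(\mathcal{V})$ to show its minimum distance is below $rn$, which forces leakage at least $1$ under that explicit attack. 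You instead argue by contrapositive: the reliability hypothesis replaces Assumption \ref{ass:rank} (it yields $\mathrm{rowspan}(G_1)\cap\mathcal{C}_{\mathrm{key}}=\{0\}$ and hence $\dim\mathcal{C}_n=Rn+\dim\mathcal{C}_{\mathrm{key}}$, and it also subsumes the paper's $R+R'\leq r$ case), the secrecy hypothesis forces the integer leakage to vanish on \emph{every} window, i.e.\ $\mathcal{C}_n\subseteq\mathcal{C}_{\mathrm{key}}+V_{\mathcal{T}}$ for all $\mathcal{T}$ of size $(1-r)n$, and dualizing shows all weight-$\leq rn$ vectors of $\mathcal{C}_{\mathrm{key}}^\perp$ lie in $\mathcal{C}_n^\perp$, so a complement $\mathcal{E}$ of their span inside $\mathcal{C}_{\mathrm{key}}^\perp$ is a length-$n$ code of dimension at least $Rn$ and distance at least $rn+1$, which Plotkin rules out. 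What each buys: the paper's route exhibits an explicit optimal reading set $\mathcal{S}^*$ (a constructive attack), whereas your route is cleaner in bookkeeping -- no full-rank reduction, no separate low-sum-rate case, no information-set construction -- at the price of being purely existential about where the leakage occurs. One small point to tighten: the case split in your Plotkin step should be on the \emph{actual} minimum distance of $\mathcal{E}$ versus $n/2$ (not on $rn+1$ versus $n/2$), since $d_{\min}(\mathcal{E})$ may exceed $n/2$ even when $r<1/2$; this is harmless because a larger distance only strengthens the conclusion (either the extended Plotkin bound with $\delta\geq r$ applies, or the classical Plotkin bound gives dimension $O(\log n)$), but it should be stated that way.
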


 \begin{remark}
 Theorem \ref{thm:linear_codes} can be extended to non-zero values of $p$. In particular, together with the lower bound (\ref{eq:SSC_ulb}), Theorem \ref{thm:linear_codes} implies that linear codes cannot achieve $C(p,r)$ for either any $p \in [0,1/2)$ and $r\in (0,1/2]$ such that $H_2(p)<r$, or any $p \in [0,1/2]$ and $r \in [1/2,1]$, except for the trivial case when $C(p,r)$ is $0$.
 \end{remark}
 
 A proof of Theorem \ref{thm:linear_codes} is given in Section \ref{sec:linear_codes_proof}, which involves a specific construction of the adversary's coordinates $\mathcal{S}$ together with the Plotkin bound to upper bound the minimum distance of a code. We remark that tighter distance bounds can be used in place of the Plotkin bound. For instance, if one uses the Elias-Bassalygo bound \cite{Bassalygo1965,Shannon1967_I,Shannon1967_II}, the rate lower bound in Theorem \ref{thm:linear_codes} can be tightened to $R > \max\{1 - H\left( \frac{1- \sqrt{1-2r}}{2}\right),0 \}$. All bounds discussed thus far are plotted in Fig. \ref{fig:sem_cap}.

  \begin{figure}[t]
    \includegraphics[width=\columnwidth]{./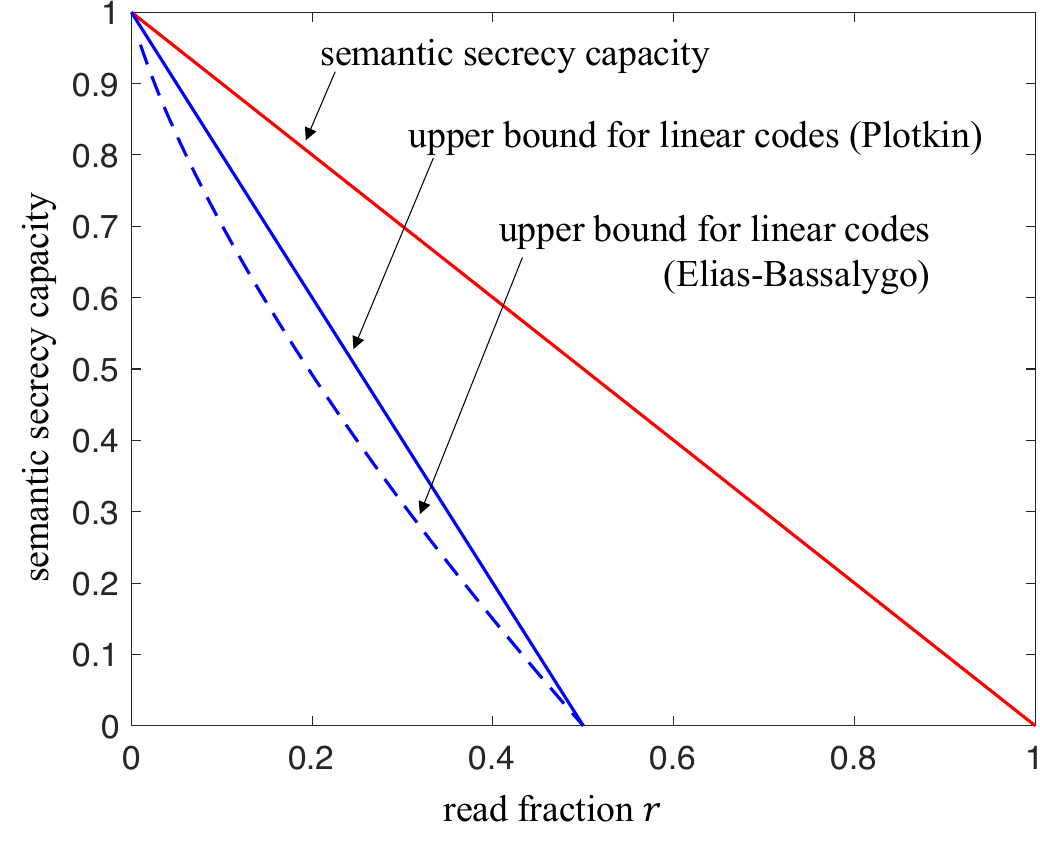}
    \caption{Bounds on linear code performance compared to the semantic secrecy capacity of the $(p,r)$-AWTC II when the channel from Alice to Bob is noiseless (i.e., $p=0$).}
    \label{fig:sem_cap}
    \end{figure}

  In light of Theorem \ref{thm:linear_codes}, non-linear codes must be considered to achieve the lower bound (\ref{eq:SSC_ulb}) for at least some values of $p \in [0,1/2]$ and $r \in [0,1]$. We turn now to non-linear codes.

 \begin{definition}[Pseudolinear Code] \label{def:plc}
 For $R\in(0,1]$, $R' \in [0,1-R]$ and positive integers $n$ and $k$, let $H$ be the parity check matrix of any binary linear code with the following parameters:
 \begin{itemize}
 \item blocklength $2^{(R+R')n}-1$ \item dimension $2^{(R+R')n}-1-\ell$ for some $\ell = O(k(R+R')n)$
 \item minimum distance at least $k+1$.
 \end{itemize}
 An $[n,Rn,R'n,k]$ psuedolinear code $\mathcal{C}_n$ is any $[n,Rn,R'n]$ code that satisfies the following two step encoding process. First, a message-key pair $(m,w) \in \mathcal{M} \times \mathcal{W}$ is mapped to the row of $H^T$ indexed by $(m,w)$, which we denote as $\bm{h}(m,w)$.\footnote{To account for the message-key pair $(0,0)$, we define $\bm{h}(0,0)$ to be the all zeros vector.} Second, $\bm{h}(m,w)$ is linearly mapped to an $n$-bit codeword by some ``generator'' matrix $G \in \{0,1\}^{\ell \times n}$, i.e., $$\bm{x}(m,w) = \bm{h}(m,w) G.$$
 Thus, the non-linearity of $\mathcal{C}_n$ is confined to the first stage of encoding.
 \end{definition}

 Towards the goal of derandomizing the random code of \cite{Wang2016}, pseudolinear codes have the following three attractive properties \cite{Guruswami2001_2}.\footnote{See \cite{Guruswami2001} for further discussion of pseudolinear codes.} First, a pseudolinear code has a succinct representation as only $\ell n = O(k(R+R')n^2)$ bits are needed to describe the generator matrix. Second, encoding is computationally efficient if $\bm{h}(m,w)$ can be obtained in time polynomial in $n$ for each $(m,w) \in \mathcal{M}\times\mathcal{W}$. For instance, we can let $H$ be the parity check matrix of a binary Bose–Chaudhuri–Hocquenghem (BCH) code of design distance $k+1$, in which case $H$ has an explicit representation and $\bm{h}(m,w)$ can be efficiently obtained by computing powers of a primitive $(2^{(R+R')n}-1)$-th root of unity from the extension field $\mathrm{GF}(2^{(R+R')n})$, e.g., see \cite{Macwilliams1977}. 
 
 Third, if we consider a \textit{random} pseudolinear code by choosing the generator matrix $G$ at random while fixing the parity check matrix $H$, then the codewords of the random code are uniformly distributed in $\{0,1\}^n$ and \textit{$k$-wise independent}, i.e., any subset of codewords of size $k$ are mutually independent.\footnote{In contrast, random linear codes have codewords that are pair-wise (i.e., 2-wise) independent in non-trivial cases.} This final property is the key to showing that pseudolinear codes achieve the best-known lower bound of $C(p,r)$.
 
 \begin{theorem} \label{thm:secret}
 Let $p \in [0,1/2]$ and $r \in [0,1]$ such that $1-H_2(p)-r$ is positive. For any $R < 1- H_2(p) - r$ and for large enough (but fixed) $k$, there exists a family pseudolinear codes $\{\mathcal{C}_n\}_{n=1}^\infty$  (where for $n\geq 1$, $\mathcal{C}_n$ is an $[n,Rn,R'n,k]$ pseudolinear code for some $R'\geq 0$) that is both reliable and semantically-secret.
 \end{theorem}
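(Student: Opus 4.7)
The plan is to choose a key rate $R'$ in the nonempty interval $(r, 1-H_2(p)-R)$, fix a parity-check matrix $H$ as in Definition \ref{def:plc} (for concreteness, that of a BCH code of design distance $k+1$), and draw the generator matrix $G \in \{0,1\}^{\ell \times n}$ uniformly at random. The resulting random code $\mathscr{C}_n$ has codewords that are uniformly distributed on $\{0,1\}^n$ and $k$-wise independent, and I would show that for $k$ a sufficiently large (but fixed) constant, a draw of $\mathscr{C}_n$ satisfies both the reliability and semantic-secrecy requirements with strictly positive probability, so a good realization exists.

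For semantic secrecy the engine is a new soft-covering lemma for $k$-wise independent random codes. For each message $m$ and read set $\mathcal{S}$, the observation $\bm{Z} = \bm{x}(m,W,\mathcal{S})$ with $W \sim \mathrm{Unif}(\mathcal{W})$ is distributed as the empirical distribution $\pi_{m,\mathcal{S}}$ of the $2^{R'n}$ projected samples $\{\bm{x}(m,w,\mathcal{S})\}_w$, and these samples inherit $k$-wise independence and uniformity on $\{0,1\}^{rn}$. The lemma I would prove states that when $R' > r$, with probability $1-2^{-\Omega(n)}$ over the random code, $\pi_{m,\mathcal{S}}$ is $2^{-\Omega(n)}$-close to $\mathrm{Unif}(\{0,1\}^{rn})$ in relative entropy (or, equivalently for this purpose, in a R\'enyi-$(1+s)$ surrogate in the style of Goldfeld--Cuff--Permuter). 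The proof bounds the $k$-th moment of the per-bin occupancy $N(\bm{z}) = \sum_w \mathds{1}\{\bm{x}(m,w,\mathcal{S}) = \bm{z}\}$, which, because only cross-moments of order at most $k$ appear, matches exactly the moment under full independence and is controlled by a classical chi-square-type calculation. Markov's inequality on this $k$-th moment then yields a per-$(m,\mathcal{S},\bm{z})$ deviation probability that, for $k$ large enough, beats a union bound over $\mathcal{M}$, $\mathscr{S}$ and $\{0,1\}^{rn}$; converting divergence proximity to a mutual-information bound in the standard way gives $\mathrm{Sem}(\mathscr{C}_n) = 2^{-\Omega(n)}$.

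For reliability I would have Bob perform bounded-Hamming-distance list decoding at radius $pn$: a decoding error on message $m$ requires some codeword $\bm{x}(m',w')$ with $m' \neq m$ to lie in the ball of radius $pn$ around the received vector $\bm{y} = \bm{x}(m,w) + \bm{e}$. For any fixed flip pattern $\bm{e}$ of weight at most $pn$, the number of such bad codewords is a sum of $k$-wise independent Bernoulli indicators each with mean at most $\binom{n}{pn}2^{-n}$, so the expected total is $2^{(R+R')n}\binom{n}{pn}2^{-n} = 2^{-\Omega(n)}$ by the choice $R + R' < 1-H_2(p)$. A Bellare--Rompel-type $k$-th moment tail bound on $k$-wise independent sums yields a per-$(m,w,\bm{e})$ failure probability of order $2^{-\Omega(kn)}$, which for $k$ large enough dominates the union bound over $\binom{n}{pn} \leq 2^n$ flip patterns, $2^{Rn}$ messages and $2^{R'n}$ keys. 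Combined with the secrecy estimate, a code drawn from this ensemble is simultaneously reliable and semantically secret with probability at least $1-2^{-\Omega(n)}$, establishing existence of the claimed family.

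The main obstacle is the soft-covering lemma for $k$-wise independent codewords. The Goldfeld--Cuff--Permuter proof invokes Chernoff-style concentration that relies on mutual independence; replacing it by a $k$-th moment Markov bound sacrifices exponentially-decaying tails in favor of polynomial-in-$k$ tails, and the delicate step is to arrange which divergence is controlled and how the per-bin exponents line up so that the resulting deviation probability, taken to power $k$, still overcomes the $2^{O(n)}$-sized union bound over $(m,\mathcal{S})$ and delivers the $2^{-\Omega(n)}$ target for $\mathrm{Sem}(\mathscr{C}_n)$. This sets the scaling requirement on $k$: a fixed constant, depending on the gap $R'-r$ and the desired secrecy exponent.
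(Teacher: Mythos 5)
Your secrecy argument is essentially the paper's route: the paper also draws a pseudolinear code with fixed $H$ and uniform $G$, reduces $\mathrm{Sem}(\mathscr{C}_n)$ to $\max_{\mathcal{S},m} D\bigl(P_{\bm{Z}|M=m}\,\big\|\,Q_X^{rn}\bigr)$, and proves a soft-covering lemma for $k$-wise independent codewords whose failure probability $2^{-k\Omega(n)}$ survives the union bound over the exponentially many $(m,\mathcal{S})$ pairs (your per-bin occupancy moment bound is a specialization of that lemma to the identity ``channel'' $\bm{Z}=\bm{X}(\mathcal{S})$, which indeed suffices here). That half is sound.

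The reliability half has a genuine gap. The paper does not prove reliability by a union bound over error patterns; it invokes a separate, nontrivial theorem (Lemma~\ref{thm:Ruz2023}, from \cite{Ruzomberka2023}) stating that random pseudolinear codes with min-distance decoding are reliable against the type-II adversary when $R'>r$ and $R+R'<1-H_2(p)$. Your proposed argument fails for two reasons. First, the quantitative claim is impossible: for a fixed $(m,w,\bm{e})$, the number $N$ of confusable codewords is a nonnegative integer sum of indicators with $\mathbb{E}[N]\approx 2^{(R+R'+H_2(p)-1)n}$, and no $k$-th moment (Bellare--Rompel) bound can push $\mathbb{P}(N\geq 1)$ below $\mathbb{E}[N]$ (since $N^k\geq N$), so the per-triple failure probability is only $2^{-(1-H_2(p)-R-R')n}$, independent of $k$; with $R=1-H_2(p)-r-\epsilon$ and $R'>r$ this exponent is a small constant $\epsilon-\epsilon'$, while the union bound ranges over roughly $2^{(R+R'+H_2(p))n}\approx 2^{(1-(\epsilon-\epsilon'))n}$ triples, so the product diverges. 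Second, and more fundamentally, a union bound over \emph{all} flip patterns $\bm{e}$ for \emph{every} $(m,w)$ would certify a combinatorial packing property (all distinct-message codewords at distance $>2pn$), i.e., it treats the adversary as omniscient; such a property is false at rates near $1-H_2(p)$ (Plotkin/Elias--Bassalygo), which is exactly why your sketch never needs $R'>r$ for reliability even though that condition is essential: the correct argument must exploit that the adversary sees only $rn$ bits, so conditioned on $\bm{Z}$ there remain about $2^{(R'-r)n}$ candidate keys, and the error is averaged over this residual key randomness against an adaptive choice of $\bm{e}$. Either import the reliability theorem of \cite{Ruzomberka2023} as the paper does, or supply an argument of that type; the packing-style union bound cannot be repaired.
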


 A proof of Theorem \ref{thm:secret} is provided in Section \ref{sec:secret_proof}. The key technical tool in the proof is a new version of Wyner's soft-covering lemma  which holds for codes with $k$-wise independent codeword. However, our version differs significantly from Wyner's \cite[Theorem 6.3]{Wyner1975}, which we state and prove in Section \ref{sec:soft_covering}. 
 
 Our version is closest to (and proved similarly to) the soft-covering lemma of Goldfeld, Cuff and Permuter \cite{Goldfeld2016}, which roughly states that if the key rate $R'$ is larger than the mutual information between Alice's channel input and the adversary's observation, then a random code with mutually independent codewords satisfies an exponential number of secrecy constraints with probability at least $1-2^{-2^{\Omega(n)}}$. Here, the double-exponential probability bound is important as it allows one to take a union bound over an exponential number of events. Our version of the lemma states that when we restrict the random code to a $k$-wise independent distribution, the same constraints hold with probability at least $1-2^{-k \Omega(n)}$. Critically, while our probability bound tends to $1$ more slowly than double-exponentially, it remains fast enough to take a union bound over an exponential number of events when $k$ is large enough.

 \subsection{Related Work} \label{sec:related_work}

 \textit{Linear Codes and Semantic-Secrecy:} Recall that Theorem \ref{thm:linear_codes} states that linear codes cannot achieve the SS capacity for the (noiseless) $(0,r)$-AWTC II for any $r \in (0,1]$. Prior to this work, some special classes of linear codes were known to not achieve the SS capacity. In particular, Ozarow's and Wyner's linear coset coding scheme \cite{Ozarow1986} does not achieve SS capacity of the $(0,r)$-AWTC II for any $r \in (0,1]$. \ifthenelse{\boolean{extend_v}}{We provide a proof of this result in Appendix \ref{sec:LCCS}.}{We provide a proof of this result in the extended paper \cite{Ruzomberka2023_2}.} We remark that the necessity of non-linear codes for achieving the secrecy capacity is a product of the \textit{joint consideration} of the semantic secrecy metric and the type II property of the wiretap channel. In contrast, linear codes are sufficient to achieve the \textit{weak} secrecy capacity over the noiseless WTC II \cite{Ozarow1986}. Furthermore, linear codes are sufficient to achieve both the weak and strong secrecy capacity of the noisy (but non-adversarial) WTC I \cite{Mahdavifar2011}.


 \textit{Code Constructions:} Explicit (and efficient) constructions that achieve the best known lower bound of the $(p,r)$-AWTC II are not known in general, except for the special cases of $p=0$ \cite{Cheraghchi2012,Chou2022} and $r = 0$ \cite{Guruswami2016}. In the general case, one promising approach is use modular constructions, which combine an existing error-control code with an invertible extractor \cite{Bellare2012,Sharifian2018,Chou2022} or algebraic manipulation detection code \cite{Wang2016_2}. However, constructing binary error-control codes that are both efficiently encodable/decodable and achieve the (reliability) capacity of the $(p,r)$-AWTC is an open problem. In contrast to the above modular constructions, pseudolinear codes offer a non-modular approach. Recently, random (and thus non-explicit) pseudolinear codes were shown to achieve the (reliability) capacity of the $(p,r)$-AWTC II \cite{Ruzomberka2023}.

 \section{Proof of Theorem \ref{thm:linear_codes}} \label{sec:linear_codes_proof}

  \textit{Notation:} For message rate $R>0$, key rate $R'\in [0,1-R]$, and blocklength $n \geq 1$ define $\mathcal{M} \triangleq \{0,1\}^{Rn}$ and $\mathcal{W} \triangleq \{0,1\}^{R'n}$. For an $[n,Rn,R'n]$ linear code $\mathcal{C}_n$, let $G$ denote the $(R+R')n \times n$ generator matrix of $\mathcal{C}_n$, which can be partitioned such that $G = \begin{bmatrix} G_M \\ G_W \end{bmatrix}$ where $G_M \in \{0,1\}^{Rn \times n}$ and $G_W \in \{0,1\}^{R'n \times n}$. In turn, the codeword corresponding to message $m \in \mathcal{M}$ and key $w \in \mathcal{W}$ is $\bm{x}(m,w) = m G_M + w G_W$. For a coordinate set $\mathcal{S} \in \mathscr{S}$, let the matrices $G_M(\mathcal{S})$ and $G_W(\mathcal{S})$ denote the columns of $G_M$ and $G_W$ indexed by $\mathcal{S}$, respectively. Using this notation, if Alice transmits codeword $\bm{x}(m,w)$ then the adversary observes $\bm{z} = m G_M(\mathcal{S}) + w G_W(\mathcal{S})$.

  \textit{Preliminaries:} Let $\mathcal{C}_n$ be an $[n,Rn,R'n]$ linear code with generator matrix $G$. We make the following assumption.

  \begin{assumption} \label{ass:rank}
 Without loss of generality (w.l.o.g.), we assume that $G$ is full rank, i.e., $\mathrm{rank}(G) = (R+R')n$.
 \end{assumption}

 The claim being w.l.o.g. is roughly as follows: if $G$ is not full rank, then either $P^{\mathrm{max}}_{\mathrm{error}}(\mathcal{C}_n) \geq 1/2$ or both $\mathcal{W}$ and $G$ can be replaced with a smaller key set and full rank generator matrix, respectively, without changing the code. \ifthenelse{\boolean{extend_v}}{A detailed discussion is provided in Appendix \ref{sec:assrank_proof}.}{A detailed discussion is provided in the extended paper \cite{Ruzomberka2023_2}.} We remark that following Assumption \ref{ass:rank}, we have that $\mathrm{rank}(G_M) = Rn$ and $\mathrm{rank}(G_W)= R'n$. 

 Before proving the converse result (Theorem \ref{thm:linear_codes}), we state a few preliminary results relating the semantic leakage to the rank of $G_M(\mathcal{S})$ and $G_W(\mathcal{S})$ for $\mathcal{S} \in \mathscr{S}$. For a code $\mathcal{C}_n$ and coordinate set $\mathcal{S} \in \mathscr{Z}$, we denote the mutual information between $M$ and $\bm{Z}$ as $I_{\mathcal{S}}(M;\bm{Z})$ (where the dependency on $\mathcal{C}_n$ is implied).

 \begin{lemma} \label{thm:rank}
 For $\mathcal{S} \in \mathscr{S}$ and $M$ uniformly distributed over $\mathcal{M}$,
 $$I_{\mathcal{S}}(M;\bm{Z}) = \mathrm{rank}\left( G(S) \right) - \mathrm{rank} \left(G_W(\mathcal{S}) \right).$$ 
 \end{lemma}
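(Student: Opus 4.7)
My plan is to compute $I_{\mathcal{S}}(M;\bm{Z}) = H(\bm{Z}) - H(\bm{Z}\mid M)$ directly, using only the linearity of the encoding together with the uniformity and independence of $M$ and $W$. The only tool I need is the elementary observation that if $U$ is uniform on $\{0,1\}^d$ and $A$ is a $d\times n$ binary matrix, then $UA$ is uniform on its row span, a set of size $2^{\mathrm{rank}(A)}$; this is immediate because the preimage of any image vector is a coset of $\ker A$, which has cardinality $2^{d-\mathrm{rank}(A)}$.

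To compute $H(\bm{Z})$, I would write $\bm{Z} = [M\;W]\,G(\mathcal{S})$. Since $M \sim \mathrm{Unif}(\mathcal{M})$ and $W \sim \mathrm{Unif}(\mathcal{W})$ are independent, $[M\;W]$ is uniform on $\{0,1\}^{(R+R')n}$, so the observation above with $A = G(\mathcal{S})$ gives $H(\bm{Z}) = \mathrm{rank}(G(\mathcal{S}))$. For $H(\bm{Z}\mid M)$, I would condition on $M = m$: then $\bm{Z} = m G_M(\mathcal{S}) + W G_W(\mathcal{S})$ is a deterministic shift of $W G_W(\mathcal{S})$ by $m G_M(\mathcal{S})$. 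Shifts preserve entropy, and because $W$ is independent of $M$ it remains uniform on $\{0,1\}^{R'n}$ after conditioning, so applying the same observation to $G_W(\mathcal{S})$ gives $H(\bm{Z}\mid M = m) = \mathrm{rank}(G_W(\mathcal{S}))$ for every $m$. Averaging over $m$ yields $H(\bm{Z}\mid M) = \mathrm{rank}(G_W(\mathcal{S}))$, and subtracting produces the claimed identity.

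I do not expect any genuine obstacle here: the lemma is essentially a restatement of rank-nullity in information-theoretic language. The only points worth flagging in a careful write-up are, first, that the shift-invariance argument inside the conditional entropy is legitimate precisely because $m G_M(\mathcal{S})$ is constant given $M = m$, and second, that the independence of $M$ and $W$ built into the setup in Section~\ref{sec:setup} is what keeps $W$ uniform after conditioning. Both are minor, so I expect the proof to occupy only a few lines.
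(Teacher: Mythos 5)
Your proof is correct and rests on the same underlying fact as the paper's: rank--nullity/coset counting shows that $\bm{Z}$ is uniform on the row space of $G(\mathcal{S})$ and, given $M=m$, uniform on a coset of the row space of $G_W(\mathcal{S})$. The paper writes out these uniform PMFs explicitly and plugs them into the definition of $I_{\mathcal{S}}(M;\bm{Z})$, whereas you package the same computation as $H(\bm{Z})-H(\bm{Z}\mid M)$; this is only a cosmetic difference, and your write-up is sound.
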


 \begin{proof}[Proof of Lemma \ref{thm:rank}]
 Let $\mathcal{S} \in \mathscr{S}$. We first characterize the joint PMF of $M$, $W$ and $\bm{Z}$, which we denote as $P_{M,W,\bm{Z}}$. We drop the subscripts from the PMF $P_{M,W,\bm{Z}}$ and its marginal PMFs when the meaning is clear from the use of the realization variables $m$, $w$ and $\bm{z}$. 
 
 For $\bm{z} \in \{0,1\}^{rn}$ and $m \in \mathcal{M}$, we have that
 \begin{align}
 P(\bm{z}|m) &= \sum_{w \in \mathcal{W}} P(\bm{z},w|m)  \stackrel{(\text{a})}{=} \sum_{w \in \mathcal{W}} P(\bm{z}|m,w) P(w) \nonumber \\
 & \stackrel{(\text{b})}{=} T_{m,\bm{z}} 2^{-R'n} \label{eq:rank_1}
 \end{align}
 where (a) follows from the independence of $M$ and $W$, (b) follows from $W \sim \mathrm{Unif}(\mathcal{W})$, and where $T_{m,\bm{z}} \triangleq \sum_{w \in \mathcal{W}} \mathds{1} \{ \bm{z} = m G_M(\mathcal{S}) + w G_W(\mathcal{S})\}$. 
 
 To simplify (\ref{eq:rank_1}), define $$\mathcal{T} \triangleq \left\{(m',\bm{z}') \in \mathcal{M} \times \{0,1\}^{rn}: T_{m',\bm{z}'} \geq 1 \right\}$$ and suppose that $(m,\bm{z}) \in \mathcal{T}$. By definition, there exists an $w \in \mathcal{W}$ such that $w G_W(\mathcal{S}) = m G_M(\mathcal{S}) + \bm{z}$. In turn, since the mapping $G_W(\mathcal{S}):\mathcal{W} \rightarrow \{0,1\}^{rn}$ is a linear transformation, there must be $2^{\mathrm{nullity}(G_W(\mathcal{S}))}$ number of $w \in \mathcal{W}$ such that $w G_W(\mathcal{S}) = m G_M(\mathcal{S}) + \bm{z}$ where $\mathrm{nullity}(G_W(\mathcal{S}))$ is the dimension of the null space of $G_W(\mathcal{S})$. By the rank-nullity theorem \cite[Theorem 2]{Hoffman1971}, $2^{\mathrm{nullity}(G_W(\mathcal{S}))} = 2^{\mathrm{dim}(\mathcal{W})-\mathrm{rank}(G_W(\mathcal{S}))} = 2^{R' n-\mathrm{rank}(G_W(\mathcal{S}))}$. In turn,
 \begin{equation} \nonumber
 T_{m,\bm{z}} = 
 \begin{cases}
 2^{R'n - \mathrm{rank}(G_W(\mathcal{S})) }, & (m,\bm{z}) \in \mathcal{T} \\
 0, & (m,\bm{z}) \not\in \mathcal{T},
 \end{cases}
 \end{equation}
 and in turn, following (\ref{eq:rank_1}),
 \begin{equation} \label{eq:rank_2}
 P(\bm{z}|m) = 
 \begin{cases}
 2^{ -\mathrm{rank}(G_W(\mathcal{S}))}, & (m,\bm{z}) \in \mathcal{T} \\
 0, & (m,\bm{z}) \not\in \mathcal{T}.
 \end{cases}
 \end{equation}
 Repeating the above approach for the PMF of $\bm{Z}$, one can show using the assumption that $m$ is uniformly distributed over $\mathcal{M} = \{0,1\}^{Rn}$ that
 \begin{equation} \label{eq:rank_3}
 \begin{aligned}
  P(\bm{z}) =  \begin{cases}
 2^{ -\mathrm{rank}(G(\mathcal{S}))}, & \exists m \in \mathcal{M} \text{ s.t. } (m,\bm{z}) \in \mathcal{T} \\
 0, & \forall m \in \mathcal{M}, \text{ } (m,\bm{z}) \not\in \mathcal{T}.
 \end{cases}
 \end{aligned}
 \end{equation}

 Using the above PMFs, we evaluate the mutual information between $M$ and $\bm{Z}$:
 \begin{align}
 I_{\mathcal{S}}(M;\bm{Z}) &\triangleq  \sum_{m \in \mathcal{M}} \sum_{\bm{z} \in \{0,1\}^{rn}} P(m,\bm{z}) \log_2 \frac{P(\bm{z}|m)}{P(\bm{z})} \nonumber \\
 & \stackrel{(\text{c})}{=}  \sum_{(m,\bm{z}) \in \mathcal{T}} P(m,\bm{z}) \log_2 2^{\mathrm{rank}(G(\mathcal{S})) - \mathrm{rank}(G_W(\mathcal{S}))} \nonumber \\
 & \stackrel{(\text{d})}{=} \mathrm{rank}\left(G(\mathcal{S})\right) - \mathrm{rank}\left(G_W(\mathcal{S})\right). \nonumber
 \end{align}
 where (c) follows from (\ref{eq:rank_2}), (\ref{eq:rank_3}), and $P(m,\bm{z}) = 0$ $\forall (m,\bm{z}) \not\in \mathcal{T}$, and (d) follows from $\sum_{(m,\bm{z}) \in \mathcal{T}} P(m,\bm{z}) = 1$.
 \end{proof}

 \begin{corollaryLemma} \label{thm:Rlr}
 If $R'+R \leq r$, then $\lim_{n \rightarrow \infty} \mathrm{Sem}(\mathcal{C}_n) = \infty$.
 \end{corollaryLemma}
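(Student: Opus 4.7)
The plan is to exhibit a specific adversary coordinate set $\mathcal{S}\in\mathscr{S}$ together with a message distribution $P_M$ that together drive $I_\mathcal{S}(M;\bm{Z})$ up to order $n$, then invoke the definition of $\mathrm{Sem}(\mathcal{C}_n)$ as a maximum over both. I will take $P_M = \mathrm{Unif}(\mathcal{M})$ so that Lemma \ref{thm:rank} applies and gives
\[
I_\mathcal{S}(M;\bm{Z}) = \mathrm{rank}(G(\mathcal{S})) - \mathrm{rank}(G_W(\mathcal{S})).
\]
The task then reduces to choosing $\mathcal{S}$ so that the first rank is as large as possible, namely $(R+R')n$, while using only the automatic bound that the second rank is at most $R'n$.

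By Assumption \ref{ass:rank}, $G$ has full row rank $(R+R')n$, hence its column rank is also $(R+R')n$ and there exist $(R+R')n$ columns of $G$ that are linearly independent over $\mathrm{GF}(2)$. The hypothesis $R+R'\leq r$ guarantees $rn\geq(R+R')n$, so I can construct $\mathcal{S}\in\mathscr{S}$ of size exactly $rn$ by collecting the indices of such a linearly independent family of columns and padding with arbitrary additional indices. For this choice, $\mathrm{rank}(G(\mathcal{S}))=(R+R')n$, whereas $\mathrm{rank}(G_W(\mathcal{S}))\leq R'n$ holds trivially because $G_W$ has only $R'n$ rows.

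Substituting into Lemma \ref{thm:rank} yields $I_\mathcal{S}(M;\bm{Z})\geq (R+R')n - R'n = Rn$. Since $\mathrm{Sem}(\mathcal{C}_n)$ is the maximum of $I_\mathcal{S}(M;\bm{Z})$ over all $P_M\in\mathcal{P}(\mathcal{M})$ and all $\mathcal{S}\in\mathscr{S}$, this lower bound transfers to give $\mathrm{Sem}(\mathcal{C}_n)\geq Rn$, which diverges as $n\to\infty$ whenever $R>0$.

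There is no hard step. The only subtlety is confirming the existence of a size-$rn$ coordinate set $\mathcal{S}$ achieving $\mathrm{rank}(G(\mathcal{S}))=(R+R')n$, and this is immediate from the full-rank assumption on $G$ together with the dimensional inequality $rn\geq (R+R')n$. The corollary is thus a clean consequence of Lemma \ref{thm:rank}, Assumption \ref{ass:rank}, and the hypothesis $R+R'\leq r$.
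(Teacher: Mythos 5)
Your proof is correct and follows essentially the same route as the paper's: apply Lemma \ref{thm:rank} with $M$ uniform, use Assumption \ref{ass:rank} and the hypothesis $R+R' \leq r$ to construct $\mathcal{S}$ with $\mathrm{rank}(G(\mathcal{S})) = (R+R')n$, and conclude $\mathrm{Sem}(\mathcal{C}_n) \geq Rn$. The only cosmetic difference is that you use the trivial upper bound $\mathrm{rank}(G_W(\mathcal{S})) \leq R'n$ where the paper notes the equality $\mathrm{rank}(G_W(\mathcal{S})) = R'n$; both suffice.
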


 \begin{proof}[Proof of Corollary \ref{thm:Rlr}]
 Suppose that $M$ is uniformly distributed and that $R+R' \leq r$. Recall that $G$ has rank $(R+R')n$ (c.f. Assumption \ref{ass:rank}). Since $R+R' \leq r$, there exists a $\mathcal{S} \in \mathscr{S}$ such that $\mathrm{rank}(G(\mathcal{S})) = \mathrm{rank}(G) = (R+R') n$. Let $\mathcal{S}$ be this coordinate set. It follows that $\mathrm{rank}(G_W(\mathcal{S})) = R'n$, and in turn, $I_\mathcal{S}(M;\bm{Z}) = Rn$ following Lemma \ref{thm:rank}. In conclusion, $\mathrm{Sem}(\mathcal{C}_n) \geq Rn$.
 \end{proof}

  For the converse analysis, we will need the following version of the Plotkin bound \cite{Plotkin1960}.
 \begin{lemma}[{Extended Plotkin bound \cite{Rudra2007}}] \label{thm:Plotkin}
 Suppose that $\Psi$ is an $[n,Rn]$ code (not necessarily linear) with minimum distance $d_{\mathrm{min}} \in (0,n/2]$. Then for $\delta \triangleq d_{\mathrm{min}}/n$, $$R \leq 1 - 2 \delta + o(1)$$ where the $o(1)$ term tends to $0$ as $n$ tends to infinity.
 \end{lemma}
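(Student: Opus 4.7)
The plan is to reduce the rate-form bound to the classical binary Plotkin bound (which applies only when the minimum distance exceeds half the blocklength) via a shortening argument, then pass to logarithms. The argument has essentially three steps.

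First, I would establish the classical Plotkin bound: any binary code $\mathcal{C}'$ of length $n'$ with minimum distance $d' > n'/2$ satisfies $|\mathcal{C}'| \leq 2d'/(2d' - n')$. The standard derivation is a double-count of the pairwise Hamming distance sum $\sum_{c \neq c'} d_H(c,c')$. On one hand this sum is at least $|\mathcal{C}'|(|\mathcal{C}'|-1)\,d'$. On the other, computing coordinate-by-coordinate it equals $\sum_{i=1}^{n'} 2 n_i (|\mathcal{C}'| - n_i)$, where $n_i$ is the number of codewords with $0$ in coordinate $i$; each term is at most $|\mathcal{C}'|^2/2$ by AM--GM, giving an upper bound of $n' |\mathcal{C}'|^2/2$. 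Rearranging yields the claimed bound.

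Second, I would shorten $\Psi$ until the classical regime is reached. Fix any $\ell$ coordinates; by pigeonhole, at least $2^{Rn}/2^\ell$ codewords of $\Psi$ agree on these coordinates. Restricting to the remaining $n' \triangleq n - \ell$ coordinates produces a code $\mathcal{C}'$ of size at least $2^{Rn - \ell}$ whose minimum distance is still at least $d_{\min} = \delta n$, since the shortened coordinates contribute $0$ to the pairwise distances between the retained codewords. Now I would choose $\ell$ minimally so that $d' > n'/2$, i.e., $\delta n > (n - \ell)/2$; setting $\ell \triangleq \lceil n(1 - 2\delta) \rceil + 1$ suffices and guarantees $2d' - n' \geq 1$.

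Third, I would combine the pieces. The classical bound now yields $2^{Rn - \ell} \leq |\mathcal{C}'| \leq 2d'/(2d' - n') \leq 2\delta n$. Taking $\log_2$ and dividing by $n$ gives $R \leq \ell/n + \log_2(2\delta n)/n = (1 - 2\delta) + O((\log n)/n) = 1 - 2\delta + o(1)$, which is the desired conclusion. I do not anticipate a substantive obstacle since the argument is classical and essentially algebraic; the only subtleties are ceiling/floor bookkeeping in the choice of $\ell$ and the boundary case $\delta$ near $1/2$, where $\ell$ reduces to a small constant and one should verify that the classical Plotkin bound still yields $|\mathcal{C}'| \leq O(n)$ so that the residual $o(1)$ term is preserved after taking logs.
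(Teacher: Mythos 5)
Your proposal is correct, and it is essentially the standard argument behind the cited result: the paper itself offers no proof of Lemma~\ref{thm:Plotkin} (it is quoted from \cite{Rudra2007} as a known bound), and the proof there is exactly your reduction --- shorten on $\ell \approx n(1-2\delta)$ coordinates to enter the regime $d' > n'/2$, apply the classical Plotkin counting bound $|\mathcal{C}'| \le 2d'/(2d'-n')$, and take logarithms. The only nit is the intermediate bound $|\mathcal{C}'| \le 2\delta n$, which should read, e.g., $|\mathcal{C}'| \le 2d' \le 2n' < 4\delta n$ (since $d'$ need not equal $\delta n$); as you already note, any $O(n)$ bound suffices and the $o(1)$ conclusion is unaffected.
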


 \textit{Converse (Proof of Theorem \ref{thm:linear_codes}) Setup:} Set $p=0$ and let $r \in [0,1]$. For any $\epsilon > 0$, let $R = \max \{1 - 2r,0 \} + \epsilon$ and let $R' \in [0,1-R]$ such that $R+R'>r$ (c.f. Corollary \ref{thm:Rlr}). In turn, we let $\mathcal{C}_n$ be an $[n,Rn,R'n]$ linear code with generator matrix $G$. W.l.o.g., we assume that $G$ is full rank (c.f. Assumption \ref{ass:rank}).

 \textit{Converse Attack:} The adversary orchestrates it attack in two steps. First, the adversary chooses an index set $\mathcal{V}\subseteq [n]$ of size $(R+R')n$ such that all columns of $G(\mathcal{V})$ are linearly independent. Note that such a set exists following our assumption that $G$ is rank $(R+R')n$. Second, the adversary chooses a coordinate set $\mathcal{S}^* \in \mathscr{S}$ to be a subset of $\mathcal{V}$ that minimizes the rank of $G_W(\mathcal{S}^*)$. Once Alice transmits her codeword $\bm{x}(M,W)$, the adversary reads the codeword bits $\bm{Z} = \bm{x}(M,W,\mathcal{S}^*)$ corresponding to the coordinates $\mathcal{S}^*$ with corresponding mutual information $I_{\mathcal{S}^*}(M;\bm{Z})$.
 
 \textit{Converse Analysis:} The goal of the converse analysis is to show that $I_{\mathcal{S}^*}(M;\bm{Z}) \geq 1$. We remark that $\mathcal{S}^*$ is a strict subset of $\mathcal{V}$ following the inequality $r<R+R'$. This fact together with the fact that all $|\mathcal{V}|$ column of $G(\mathcal{V})$ are linearly independent implies that the rank of $G(\mathcal{S}^*)$ is $rn$. In turn, following Lemma \ref{thm:rank}, 
 \begin{align}
 I_{\mathcal{S}^*}(M;\bm{Z}) = rn - \mathrm{rank}(G_W(\mathcal{S}^*)). \label{eq:suff_I}
 \end{align}
 In the converse analysis, we show that $rn - \mathrm{rank}(G_W(\mathcal{S}^*)) \geq 1$.

 We proceed with the following dual code perspective. Consider $G_W(\mathcal{V})$ as the $R'n \times (R+R')n$ generator matrix of some $[(R+R')n,R'n]$ linear code $\Psi$. In turn, let $G_W^{\perp}(\mathcal{V})$ denote the $Rn \times (R+R')n$ generator matrix of the $[(R+R')n,Rn]$ dual code $\Psi^{\perp}$ of $\Psi$. By definition, $G_W(\mathcal{V})$ is the parity check matrix corresponding to the generator matrix $G_W^{\perp}(\mathcal{V})$. Let $d^{\perp}_{\mathrm{min}}$ denote the minimum distance of $\Psi^{\perp}$. By the definition of the parity check matrix (e.g., see \cite{Macwilliams1977}), there exists $d^{\perp}_{\mathrm{min}}$ linearly dependent columns of the parity check matrix $G_W(\mathcal{V})$.
 Hence, if $d^{\perp}_{\mathrm{min}} \leq rn$, then the adversary's choice of $\mathcal{S}^*$ contains the indices of these $d^{\perp}_{\mathrm{min}}$ linearly dependent columns of $G_W$, i.e, the rank of $G_W(\mathcal{S}^*)$ is bounded above by $rn - 1$. In turn, $I_{\mathcal{S}^*}(M;Z) \geq 1$ via (\ref{eq:suff_I}). To complete the proof, we show that $d^{\perp}_{\mathrm{min}} \leq rn$.
 
 Applying the Plotkin bound (Lemma \ref{thm:Plotkin}) to the dual code $\Psi^{\perp}$, we have that
 \begin{equation} \label{eq:plotkin}
 \frac{R}{R+R'} \leq 1 - 2 \delta^{\perp} + o(1) 
 \end{equation}
 for the distance parameter $\delta^{\perp} \triangleq \frac{d^{\perp}_{\mathrm{min}}}{(R+R')n}$ and where the $o(1)$ term tends to $0$ as $n$ tends to infinity. In turn, for large enough $n$,
 \begin{align}
 d^{\perp}_{\mathrm{min}} & \stackrel{(d)}{\leq} \frac{R'n}{2} + o(n) \nonumber \\
 & \stackrel{(e)}{\leq} \frac{2r - \epsilon}{2} + o(n) \nonumber \\
 & \stackrel{(f)}{<} rn \nonumber
 \end{align}
 where (d) follows from a rearrangement of (\ref{eq:plotkin}), (e) follows from the setting of rate $R = \max \{1 - 2r,0 \} + \epsilon$ and the trivial inequalities $R+R' \leq 1$ and $\max \{1-2r,0 \} \geq 1- 2r$, and (f) follows for large enough $n$. In conclusion, for large enough $n$, $I_{\mathcal{S}^*}(M;\bm{Z}) \geq 1$ and thus $\mathrm{Sem}(\mathcal{C}_n) \geq 1$.

 \section{A Soft-Covering Lemma for $k$-wise Independent Codewords} \label{sec:soft_covering}

 \textit{Notation:} In this section only, we consider a more general code model than that introduced in Section \ref{sec:setup}. For an alphabet $\mathcal{U}$ which is not necessarily binary, a blocklength $n$ and a (private) key rate $R' > 0$, we define an $[n,R'n]$ code $\mathcal{C}_n$ as a subset of $\mathcal{U}^n$ of size $|\mathcal{C}_n| = 2^{R'n}$. We will often describe $\mathcal{C}_n$ by its set of codewords $\{ \bm{u}(w,\mathcal{C}_n) \}_{w \in \mathcal{W}}$ for a key set $\mathcal{W} = [2^{R'n}]$.
 
We introduce the soft-covering problem, depicted in Fig. \ref{fig:coding_prob_0}. The problem setup is as follows. For a blocklength $n \geq 1$, let $\mathcal{C}_n = \{\bm{u}(w,\mathcal{C}_n)\}_{w \in \mathcal{W}}$ be an $[n,R'n]$ code. Given a finite input alphabet $\mathcal{U}$, an input distribution $Q_U$, a finite output alphabet $\mathcal{V}$ and channel $Q_{V|U}$, consider the PMFs induced on the output sequence $\bm{V} \in \mathcal{V}^n$ when an input sequence $\bm{U} \in \mathcal{U}^n$ is sent through the $n$-shot memoryless channel $Q_{V|U}^n$: for $\bm{v} \in \mathcal{V}^n$,
\begin{enumerate}
\item The PMF of $\bm{V}$ when $\bm{U}$ is drawn randomly from $Q^n_U$, i.e.,
$$Q_{\bm{V}}(\bm{v}) = Q_{V}^n(\bm{v}) = \sum_{\bm{u} \in \mathcal{U}} Q^n_{V|U}(\bm{v}|\bm{u}) Q_{U}^n(\bm{u}).$$
\item The PMF of $\bm{V}$ when  $\bm{U}$ is the codeword $\bm{u}(W,\mathcal{C}_n)$ for $W \sim \mathrm{Unif}(\mathcal{W})$, i.e., 
 \begin{equation} \label{eq:PVC}
 P^{(\mathcal{C}_n)}_{\bm{V}}(\bm{v}) \triangleq \sum_{w \in \mathcal{W}} Q^n_{V|U}(\bm{v}|\bm{u}(w,\mathcal{C}_n)) 2^{-Rn}.
 \end{equation}

\end{enumerate}

The soft-covering problem asks how to design a code $\mathcal{C}_n$ such that the induced PMF $\mathcal{P}^{(\mathcal{C}_n)}_{\bm{V}}$ is approximately $Q_{V}^n$ in the limit as $n$ tends to infinity. The following lemma states that if $R' > I(U;V)$, then for any integer $k$ large enough a \textit{random $[n,R'n]$ code} $\mathscr{C}_n$ with $k$-wise independent codewords each drawn from distribution $Q_U^n$ results in $P^{(\mathscr{C}_n)}_{\bm{V}} \approx Q^n_{V}$ for large enough $n$. Recall that we denote random codes with script typeface (e.g., $\mathscr{C}_n$) and we denote realizations of random codes with calligraphic typeface (e.g., $\mathcal{C}_n$).

   \begin{figure}[t]
  \includegraphics[width=\columnwidth]{./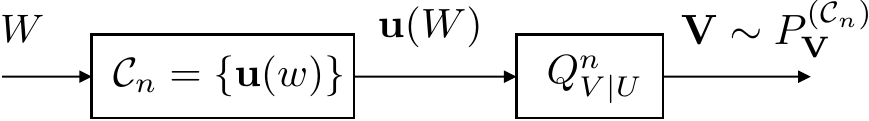}
  \caption{The soft-covering problem: the goal is to design the code $\mathcal{C}_n$ to make $P^{(\mathcal{C}_n)}_{\bm{V}} \approx Q^n_{V}$.}
  \label{fig:coding_prob_0}
  \end{figure}

 \begin{lemma}[Soft-covering lemma for $k$-wise independent codewords] \label{thm:sscl_LI}
 Suppose that the random code $\mathscr{C}_n$ has $k$-wise independent codewords for some even integer $k\geq 4$, each drawn from a PMF $Q_U^n$ for finite $\mathcal{U}$. Let $Q_{V|U}$ be any conditional PMF where $|\mathcal{V}|$ is finite and let $R' > I(U;V)$. There exists some $\gamma_0 >0$ and $\gamma_1 >0$ that depend only on $R'$ and $I(U;V)$ such that for large enough $n$
 $$\mathbb{P}_{\mathscr{C}_n} \left( D\left(P_{\bm{V}}^{(\mathscr{C}_n)} \Big|\Big| Q^n_{V} \right) > 2^{-\gamma_1 n} \right) \leq 2^{(-k \gamma_0 + \log_2 |\mathcal{V}|) n}$$
 where we recall that $D$ is the relative entropy.
 \end{lemma}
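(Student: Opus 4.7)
The plan is to adapt the soft-covering lemma of Goldfeld, Cuff and Permuter to the $k$-wise independent setting by replacing their Chernoff step (which requires fully independent codewords) with a $k$-th moment concentration inequality. Fix $\bm{v} \in \mathcal{V}^n$ and observe that, with $N := 2^{R'n}$,
\begin{equation*}
N \cdot P_{\bm{V}}^{(\mathscr{C}_n)}(\bm{v}) = \sum_{w \in \mathcal{W}} Q_{V|U}^n(\bm{v} \mid \bm{u}(w,\mathscr{C}_n))
\end{equation*}
is a sum of $N$ random variables that are $k$-wise independent across $w$, each with marginal mean $Q_V^n(\bm{v})$. The target is to prove that, with high probability over $\mathscr{C}_n$, this empirical average lies within a $(1 \pm 2^{-\gamma_1 n})$ factor of $Q_V^n(\bm{v})$ for every $\bm{v}$ in a jointly typical set, and then to convert the resulting pointwise ratio bound into the stated bound on the relative entropy.

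For the concentration, first perform the standard likelihood truncation: fix a small $\gamma \in (0,R')$ and define
\begin{equation*}
X_w(\bm{v}) := Q_{V|U}^n(\bm{v} \mid \bm{u}(w,\mathscr{C}_n)) \cdot \mathds{1}\{Q_{V|U}^n(\bm{v}\mid \bm{u}(w,\mathscr{C}_n)) \leq 2^{\gamma n} Q_V^n(\bm{v})\}.
\end{equation*}
The variables $\{X_w(\bm{v})\}_w$ remain $k$-wise independent, are bounded above by $M := 2^{\gamma n} Q_V^n(\bm{v})$, have mean differing from $Q_V^n(\bm{v})$ by an additive error of order $Q_V^n(\bm{v}) \cdot 2^{-\Omega(n)}$ on joint-typical $\bm{v}$, and by a direct calculation have variance $\sigma^2 \leq 2^{\gamma n} Q_V^n(\bm{v})^2$.

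The key technical step is a $k$-th moment inequality for sums of $k$-wise independent random variables. Expanding $\mathbb{E}[(\sum_w (X_w(\bm{v}) - \mathbb{E} X_w(\bm{v})))^k]$ as a sum over all index tuples $(w_1,\ldots,w_k)$ and grouping them by the equality partition of the indices, every singleton block vanishes by centering, and by $k$-wise independence each surviving partition's contribution factors into a product of within-block moments. Counting partitions of $[k]$ into blocks of size at least $2$ and using $\mathbb{E}[|X_w - \mathbb{E} X_w|^s] \leq M^{s-2}\sigma^2$ for $s \geq 2$ yields a bound of the form $(Ck)^{k/2}(N\sigma^2)^{k/2} + (Ck)^k N M^k$ for a universal constant $C$. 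Since $N\sigma^2 \leq 2^{(R'+\gamma)n} Q_V^n(\bm{v})^2$ while $M^2 = 2^{2\gamma n} Q_V^n(\bm{v})^2$, the first term dominates, and Markov's inequality applied to the $k$-th power with deviation $2^{-\gamma_1 n} N Q_V^n(\bm{v})$ yields a per-$\bm{v}$ tail probability of at most $(Ck)^{k/2} \cdot 2^{-nk(R'-\gamma-2\gamma_1)/2}$.

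Finally, a union bound over $\bm{v} \in \mathcal{V}^n$ produces the extra factor $2^{n\log_2|\mathcal{V}|}$ appearing in the statement, while $\bm{v}$'s with atypically small $Q_V^n(\bm{v})$ are absorbed into a low-probability event that contributes negligibly to the relative entropy. The resulting multiplicative bound $|P_{\bm{V}}^{(\mathscr{C}_n)}(\bm{v})/Q_V^n(\bm{v}) - 1| \leq 2^{-\gamma_1 n}$ on the typical set is converted into $D(P_{\bm{V}}^{(\mathscr{C}_n)} \| Q_V^n) \leq 2^{-\gamma_1 n}$ via $\log(1+x) \leq x/\ln 2$ and a crude bound on the atypical tail. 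Choosing $\gamma$ and $\gamma_1$ small enough that $R' - \gamma - 2\gamma_1 > 0$ (possible precisely because $R' > I(U;V)$) produces constants $\gamma_0, \gamma_1 > 0$ depending only on $R' - I(U;V)$. I expect the main obstacle to be the simultaneous balance of the truncation level $\gamma$, the deviation level $\gamma_1$, and the probability exponent $\gamma_0$: one must show that the dominant $(N\sigma^2)^{k/2}$ contribution to the $k$-th central moment beats the $k$-th power of the deviation threshold by a factor that is exponential in both $n$ and $k$ simultaneously, which is precisely what replaces the double-exponential Chernoff tail used in the mutually independent case and yields the scaling $-k\gamma_0 n$ in the final exponent.
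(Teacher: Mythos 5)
Your overall architecture is the same as the paper's: a typical-set (likelihood) truncation of $P^{(\mathscr{C}_n)}_{\bm{V}}$ at threshold $2^{\gamma n}Q^n_V(\bm{v})$, a $k$-wise-independence concentration bound applied per $\bm{v}$ to the truncated sum, a union bound over $\bm{v}\in\mathcal{V}^n$ producing the $\log_2|\mathcal{V}|$ term, and a conversion of the multiplicative ratio bound into a relative-entropy bound. The $k$-th central-moment inequality you propose to derive by the partition expansion is, up to constants, exactly the Bellare--Rompel lemma that the paper invokes as a black box, and your exponent $-nk(R'-\gamma-2\gamma_1)/2$ matches the paper's $-k\eta n$. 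One parameter slip: $\gamma$ cannot be ``small in $(0,R')$''; for the truncated mean to stay within a $1\pm 2^{-\Omega(n)}$ factor of $Q^n_V(\bm{v})$ you need the truncation level strictly above the information density's typical value, i.e.\ $\gamma\in(I(U;V),R')$, which is exactly where the hypothesis $R'>I(U;V)$ enters.

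The genuine gap is your treatment of the truncated-away (atypical) mass. Its contribution to $D(P^{(\mathscr{C}_n)}_{\bm{V}}\|Q^n_V)$ is only negligible if the \emph{realized} random quantity $\sum_{\bm{v}}P^{(\mathscr{C}_n)}_{\bm{V},2}(\bm{v})$ is exponentially small (it gets multiplied by a factor as large as $n\log_2\max_v Q_V(v)^{-1}$), and --- crucially --- this must hold with failure probability $2^{-k\Omega(n)}$, not merely $2^{-\Omega(n)}$. Controlling the \emph{mean} of the truncated variables, as you do, only bounds the expectation of this mass; a ``crude bound on the atypical tail'' such as Markov's inequality then yields a failure probability with no $k$ in the exponent, which destroys the $2^{(-k\gamma_0+\log_2|\mathcal{V}|)n}$ form of the lemma and, downstream, the union bound over exponentially many $(m,\mathcal{S})$ pairs that is the entire point of the $k$-wise construction. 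The paper closes this by applying a separate $k$-wise concentration inequality (Schmidt--Siegel--Srinivasan) to the $[0,1]$-valued variables $T_w=\mathbb{P}_{\bm{V}}\bigl((\bm{U}(w,\mathscr{C}_n),\bm{V})\notin\mathcal{A}_\epsilon\mid \bm{U}=\bm{U}(w,\mathscr{C}_n)\bigr)$, after normalizing so the required independence order equals $k$. You could equally well reuse your own $k$-th moment inequality on these bounded variables, but some $k$-dependent tail bound for the aggregate atypical mass is indispensable and is currently missing from your argument.
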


 \subsection{Overview of Proof of Lemma \ref{thm:sscl_LI}}

 \textit{Setup:} Let the blocklength $n \geq 1$ and key rate $R' > I(U;V)$, and let $k$ be a positive integer that will be set later. In turn, let $\mathscr{C}_n$ be a random $[n,R'n]$ code drawn from any distribution that has $k$-wise independent codewords each with marginal PMF $Q^n_U$.
 
 The proof of Lemma \ref{thm:sscl_LI} follows a two step approach. In the first step, the proof closely follows the proof outline of \cite{Goldfeld2016} in which we construct an upper bound on the relative entropy $D(P_{\bm{V}}^{(\mathscr{C}_n)} || Q^n_{V})$ based on a typical set construction of $n$-symbol sequences. In the second step, the proof diverges from \cite{Goldfeld2016} to analyze how the relative entropy upper bound concentrates. This second step uses the $k$-wise independent property of the random code $\mathscr{C}_n$.
 
 Define the information density of a scalar pair $(u,v) \in \mathcal{U} \times \mathcal{V}$ as
 $ i_{Q_{U,V}}(u;v) \triangleq \log_2 \frac{Q_{V|U}(v|u)}{Q_{V}(v)}.$ In turn, define the information density of an $n$-symbol sequence pair $(\bm{u},\bm{v}) \in \mathcal{U}^n \times \mathcal{V}^n$,
 $$ i_{Q^n_{U,V}}(\bm{u};\bm{v}) \triangleq \sum_{j=1}^n i_{Q_{U,V}}(u_j;v_j).$$
 For $\epsilon > 0$, define a typical set of $n$-symbol sequence pairs 
 $$\mathcal{A}_{\epsilon} \triangleq \left\{ (\bm{u},\bm{v}) \in \mathcal{U}^n \times \mathcal{V}^n: i_{Q^n_{U,V}}(\bm{u};\bm{v}) < \left(I(U;V)+\epsilon \right)n \right\}.$$ Recall that for an $[n,R'n]$ code $\mathcal{C}_n$, the PMF $P^{(\mathcal{C}_n)}_{\bm{V}}$ is the PMF of $\bm{V}$ when $\bm{U}$ is a codeword drawn from the code $\mathcal{C}_n$ (c.f. (\ref{eq:PVC})). We split $P^{(\mathcal{C}_n)}_{\bm{V}}$ into two terms based on the typical set $\mathcal{A}_\epsilon$: for $\bm{v} \in \mathcal{V}^n$, define
 \begin{align}
 &P^{(\mathcal{C}_n)}_{\bm{V},1}(\bm{v}) \triangleq \nonumber \\
 & \hspace{2em} 2^{-Rn} \sum_{w \in \mathcal{W}} Q^n_{V|U}(\bm{v}|\bm{u}(w,\mathcal{C}_n)) \mathds{1} \{(\bm{u}(w,\mathcal{C}_n),\bm{v}) \in \mathcal{A}_\epsilon\}, \nonumber
 \end{align}
 and define
 \begin{align}
 &P^{(\mathcal{C}_n)}_{\bm{V},2}(\bm{v}) \triangleq \nonumber \\
 & \hspace{2em} 2^{-Rn} \sum_{w \in \mathcal{W}} Q^n_{V|U}(\bm{v}|\bm{u}(w,\mathcal{C}_n)) \mathds{1} \{(\bm{u}(w,\mathcal{C}_n),\bm{v}) \not\in \mathcal{A}_\epsilon\}. \nonumber
 \end{align}
 By inspection, $P^{(\mathcal{C}_n)}_{\bm{V}} = P^{(\mathcal{C}_n)}_{\bm{V},1} + P^{(\mathcal{C}_n)}_{\bm{V},2}$; note that $P^{(\mathcal{C}_n)}_{\bm{V},1}$ and $P^{(\mathcal{C}_n)}_{\bm{V},2}$ may not be PMFs. We also define the ratios
 \begin{equation} \nonumber
 \Delta^{(\mathcal{C}_n)}_{\bm{V},1}(\bm{v}) \triangleq \frac{P^{(\mathcal{C}_n)}_{\bm{V},1}(\bm{v})}{Q^n_{V}(\bm{v})} \hspace{1em} \text{ and } \hspace{1em}
 \Delta^{(\mathcal{C}_n)}_{\bm{V},2}(\bm{v}) \triangleq \frac{P^{(\mathcal{C}_n)}_{\bm{V},2}(\bm{v})}{Q^n_{V}(\bm{v})}.
 \end{equation}
 We restate a result from \cite{Goldfeld2016} that bounds the relative entropy of $P^{(\mathcal{C}_n)}_{\bm{V}}$ and $Q^n_V$ in terms of the introduced quantities.

 \begin{lemma}[{\cite[Lemma 3]{Goldfeld2016}}] \label{thm:RE_UB}
 For every $[n,R'n]$ code $\mathcal{C}_n$,
 \begin{equation} \nonumber
 \begin{aligned}
 &D\left( P^{(\mathcal{C}_n)}_{\bm{V}} \Big|\Big| Q^n_{V} \right) \leq H_2 \left( \sum_{\bm{v} \in \mathcal{V}^n} P^{(\mathcal{C}_n)}_{\bm{V},2}(\bm{v}) \right) \\
 &\hspace{5em} + D\left( P^{(\mathcal{C}_n)}_{\bm{V},1} \Big|\Big| Q^n_{V} \right) + D\left( P^{(\mathcal{C}_n)}_{\bm{V},2} \Big|\Big| Q^n_{V} \right).
 \end{aligned}
 \end{equation}
 \end{lemma}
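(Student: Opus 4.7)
The plan is to apply the log-sum inequality pointwise in $\bm{v}$ to the decomposition $P^{(\mathcal{C}_n)}_{\bm{V}} = P^{(\mathcal{C}_n)}_{\bm{V},1} + P^{(\mathcal{C}_n)}_{\bm{V},2}$, with a matching split of the denominator $Q^n_V(\bm{v})$ using the total masses $\alpha_i \triangleq \sum_{\bm{v}} P^{(\mathcal{C}_n)}_{\bm{V},i}(\bm{v})$. Since $P^{(\mathcal{C}_n)}_{\bm{V}}$ is a PMF, $\alpha_1 + \alpha_2 = 1$. Recall the log-sum inequality: for nonnegative $a_1,a_2$ and positive $b_1,b_2$, $(a_1+a_2)\log\tfrac{a_1+a_2}{b_1+b_2} \leq a_1\log\tfrac{a_1}{b_1} + a_2\log\tfrac{a_2}{b_2}$. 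I would instantiate it with $a_i = P^{(\mathcal{C}_n)}_{\bm{V},i}(\bm{v})$ and $b_i = \alpha_i Q^n_V(\bm{v})$, so that $b_1+b_2 = Q^n_V(\bm{v})$.

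Summing the resulting pointwise inequality over $\bm{v}$ and pulling the $\log_2 \alpha_i$ factors out of the sum yields $D(P^{(\mathcal{C}_n)}_{\bm{V}} || Q^n_V) \leq D(P^{(\mathcal{C}_n)}_{\bm{V},1} || Q^n_V) + D(P^{(\mathcal{C}_n)}_{\bm{V},2} || Q^n_V) - \alpha_1 \log_2 \alpha_1 - \alpha_2 \log_2 \alpha_2$, where $D$ is extended to sub-probability measures via the same formula used in the paper. The last two terms equal $H_2(\alpha_2)$ by symmetry of the binary entropy, and since $\alpha_2 = \sum_{\bm{v}} P^{(\mathcal{C}_n)}_{\bm{V},2}(\bm{v})$, this is exactly the stated bound.

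An equivalent route I would keep in reserve is a data-processing argument: introduce an auxiliary binary random variable $T \in \{1,2\}$ with joint laws $P_{T,\bm{V}}(i,\bm{v}) = P^{(\mathcal{C}_n)}_{\bm{V},i}(\bm{v})$ and $Q_{T,\bm{V}}(i,\bm{v}) = \alpha_i Q^n_V(\bm{v})$. Data processing then gives $D(P^{(\mathcal{C}_n)}_{\bm{V}} || Q^n_V) \leq D(P_{T,\bm{V}} || Q_{T,\bm{V}})$, and direct expansion of the right-hand side reproduces the same three-term bound. Both routes reduce the problem to the same algebraic identity once the denominator split $Q^n_V(\bm{v}) = \alpha_1 Q^n_V(\bm{v}) + \alpha_2 Q^n_V(\bm{v})$ is chosen.

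The only real technical check is absolute continuity: the sub-probability measures $P^{(\mathcal{C}_n)}_{\bm{V},i}$ must satisfy $\mathrm{supp}(P^{(\mathcal{C}_n)}_{\bm{V},i}) \subseteq \mathrm{supp}(Q^n_V)$ so that each relative entropy term is well defined and positive terms $\alpha_i$ can appear in denominators after the log-sum step. This is inherited from the symbol-level absolute continuity $Q_{V|U} \ll Q_V$ via the product structure of the memoryless channel and is essentially automatic. One should also handle the degenerate case $\alpha_2 \in \{0,1\}$ by continuity (with the convention $0 \log_2 0 = 0$), but beyond that bookkeeping no substantive obstacle remains; the whole argument is one clean application of the log-sum inequality with the right choice of $b_i$.
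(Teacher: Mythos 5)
The paper does not reprove this lemma; it simply cites \cite[Lemma 3]{Goldfeld2016}. Your log-sum argument is correct and is, in essence, the standard proof of that lemma: splitting $Q^n_V(\bm{v})$ as $\alpha_1 Q^n_V(\bm{v}) + \alpha_2 Q^n_V(\bm{v})$ and applying the log-sum inequality pointwise gives
\[
D\bigl(P^{(\mathcal{C}_n)}_{\bm{V}} \| Q^n_V\bigr) \le D\bigl(P^{(\mathcal{C}_n)}_{\bm{V},1} \| Q^n_V\bigr) + D\bigl(P^{(\mathcal{C}_n)}_{\bm{V},2} \| Q^n_V\bigr) - \alpha_1\log_2\alpha_1 - \alpha_2\log_2\alpha_2,
\]
and $-\alpha_1\log_2\alpha_1-\alpha_2\log_2\alpha_2=H_2(\alpha_2)$ with $\alpha_2=\sum_{\bm v}P^{(\mathcal{C}_n)}_{\bm{V},2}(\bm v)$, exactly matching the statement. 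Your second route (introducing an auxiliary indicator $T$ with $Q_{T,\bm V}(i,\bm v)=\alpha_i Q^n_V(\bm v)$ and invoking data processing, or equivalently the chain rule for relative entropy) is the same bound in disguise and is the framing that appears in the cited reference. Your side remarks on absolute continuity and the edge cases $\alpha_2\in\{0,1\}$ are also handled correctly: $P^{(\mathcal{C}_n)}_{\bm{V},i}\le P^{(\mathcal{C}_n)}_{\bm{V}}\ll Q^n_V$ by the memoryless product structure, and at $\alpha_2\in\{0,1\}$ one split term vanishes together with $H_2(\alpha_2)$, so the inequality degenerates harmlessly. No gap.
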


 We remark that the RHS of the inequality of Lemma \ref{thm:RE_UB} is well defined if we extend the definition of relative entropy $D(\cdot||\cdot)$ in the natural way to account for functions $P^{(\mathcal{C}_n)}_{\bm{V},1}$ and $P^{(\mathcal{C}_n)}_{\bm{V},2}$ which may not be PMFs. The following sufficient condition for Lemma \ref{thm:sscl_LI} follows from Lemma \ref{thm:RE_UB}.

 \begin{lemma}[Sufficient Condition for Lemma \ref{thm:sscl_LI}] \label{thm:suff_sscl_LI}
 Suppose that for some $\pi_0 \in [0,1]$ and with probability at least $1-\pi_0$ over the random code distribution, for some $\pi_1>0$
 \begin{equation} \label{eq:suff_1}
  \sum_{\bm{v} \in \mathcal{V}^n} P^{(\mathscr{C}_n)}_{\bm{V},2}(\bm{v}) < 2^{-\pi_1 n}
 \end{equation}
 and
 \begin{equation} \label{eq:suff_2}
 \Delta^{(\mathscr{C}_n)}_{\bm{V},1}(\bm{v}) < 1 + 2^{-\pi_1 n} \text{ for all } \bm{v} \in \mathcal{V}^n.
 \end{equation}
 Then 
 \begin{equation} \label{eq:suff_3}
 \mathbb{P}_{\mathscr{C}_n}\left( D\left( P^{(\mathscr{C}_n)}_{\bm{V}} \Big|\Big| Q^n_{V} \right) \geq q_n 2^{-\pi_1 n} \right) \leq \pi_0
 \end{equation}
 where $q_n = 2\log_2 e + \pi_1 n + n \log_2 \left( \max\limits_{v \in \mathrm{supp}(Q_V)} \frac{1}{Q_V(v)} \right)$.
 \end{lemma}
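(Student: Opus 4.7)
The plan is to invoke Lemma \ref{thm:RE_UB} directly on a realization $\mathcal{C}_n$ of $\mathscr{C}_n$ satisfying both (\ref{eq:suff_1}) and (\ref{eq:suff_2}), and then bound each of the three terms on the right-hand side by a quantity of the form $(\text{poly}(n))\cdot 2^{-\pi_1 n}$. Once the three bounds sum to exactly $q_n 2^{-\pi_1 n}$, the conclusion (\ref{eq:suff_3}) is immediate since, by hypothesis, the joint event (\ref{eq:suff_1})--(\ref{eq:suff_2}) holds with probability at least $1-\pi_0$, so the event $\{D(P^{(\mathscr{C}_n)}_{\bm{V}}\|Q^n_V)\ge q_n 2^{-\pi_1 n}\}$ is contained in its complement.

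For the binary entropy term $H_2\!\left(\sum_{\bm{v}} P^{(\mathcal{C}_n)}_{\bm{V},2}(\bm{v})\right)$, I would use monotonicity of $H_2$ on $[0,1/2]$ together with the standard bound $H_2(x)\le x\log_2(1/x)+x\log_2 e$, evaluated at $x=2^{-\pi_1 n}$, which yields at most $\pi_1 n\cdot 2^{-\pi_1 n}+\log_2 e\cdot 2^{-\pi_1 n}$. For the first divergence, I would rewrite $D(P^{(\mathcal{C}_n)}_{\bm{V},1}\|Q^n_V)=\sum_{\bm{v}} P^{(\mathcal{C}_n)}_{\bm{V},1}(\bm{v})\log_2\Delta^{(\mathcal{C}_n)}_{\bm{V},1}(\bm{v})$, then apply (\ref{eq:suff_2}) together with $\log_2(1+t)\le t\log_2 e$ to bound the log factor uniformly by $2^{-\pi_1 n}\log_2 e$, and finally note that $\sum_{\bm{v}} P^{(\mathcal{C}_n)}_{\bm{V},1}(\bm{v})\le 1$, giving a bound of $2^{-\pi_1 n}\log_2 e$. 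For the second divergence, I would use the crude pointwise upper bound $\log_2\!\frac{P^{(\mathcal{C}_n)}_{\bm{V},2}(\bm{v})}{Q^n_V(\bm{v})}\le n\log_2\!\big(\max_{v\in\mathrm{supp}(Q_V)} 1/Q_V(v)\big)$, valid wherever the summand is nonzero, and then factor this out and invoke (\ref{eq:suff_1}) on the remaining mass to obtain $n\log_2(\max_v 1/Q_V(v))\cdot 2^{-\pi_1 n}$.

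Adding the three bounds gives exactly $2^{-\pi_1 n}\big(\pi_1 n + 2\log_2 e + n\log_2(\max_v 1/Q_V(v))\big)=q_n 2^{-\pi_1 n}$, completing the argument. There is no deep obstacle here, as the $k$-wise independence hypothesis plays no role in this step; Lemma \ref{thm:suff_sscl_LI} is purely a deterministic conversion from the two pointwise/summed control estimates into a relative-entropy bound. The only subtlety worth care is the treatment of the log-ratio in the second divergence, since $P^{(\mathcal{C}_n)}_{\bm{V},2}$ is a sub-PMF rather than a probability measure and $Q^n_V(\bm{v})$ can be arbitrarily small on its support; replacing the denominator by its worst case $(\min_v Q_V(v))^n$ is the step that forces the linear-in-$n$ term in $q_n$, and it is also what makes clear why $\pi_1 n$ must eventually dominate $\log_2(\max_v 1/Q_V(v))$ in the downstream use of the lemma.
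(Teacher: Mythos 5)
Your proposal is correct and matches the paper's own proof essentially step for step: both invoke Lemma \ref{thm:RE_UB} on a good realization and bound the three terms via $H_2(x)\le x\log_2\frac{e}{x}$, the bound $\log_2(1+t)\le t\log_2 e$ with $\sum_{\bm{v}}P^{(\mathcal{C}_n)}_{\bm{V},1}(\bm{v})\le 1$, and the worst-case ratio bound $\bigl(\max_{v\in\mathrm{supp}(Q_V)}1/Q_V(v)\bigr)^n$ combined with (\ref{eq:suff_1}), summing to exactly $q_n 2^{-\pi_1 n}$. Your closing probability argument (the bad event is contained in the complement of the event where (\ref{eq:suff_1})--(\ref{eq:suff_2}) hold) is also the paper's.
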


 \ifthenelse{\boolean{extend_v}}{
 \begin{proof}[Proof of Lemma \ref{thm:suff_sscl_LI}]
 Let $\pi_1>0$ and suppose that $\mathcal{C}_n$ is a realization of $\mathscr{C}_n$ such that both (\ref{eq:suff_1}) and  (\ref{eq:suff_2}) hold. We bound each of the 3 terms in the inequality of Lemma \ref{thm:RE_UB} using (\ref{eq:suff_1}) and  (\ref{eq:suff_2}). 
 
 Consider the first term. Following (\ref{eq:suff_1}) and the inequality\footnote{This inequality follows from an application of both the inequality $\frac{x}{1+x} \leq \ln(1+x)$ for $x>-1$ and the definition of $H_2(x)$.} $H_2(x) \leq x \log_2 \frac{e}{x}$ for $x \in [0,1]$, we have that 
 \begin{equation} \label{eq:suff_4}
 H_2\left( \sum_{\bm{v} \in \mathcal{V}^n} P^{(\mathcal{C}_n)}_{\bm{V},2}(\bm{v}) \right) \leq H_2(2^{-\pi_1 n}) < 2^{-\pi_1 n} (\log_2 e + \pi_1 n). 
 \end{equation}
 Moving on to the second term, following (\ref{eq:suff_2}) and the inequality $\log_2(1+x) \leq x \log_2 e$ for $x>0$, we have that 
 \begin{align}
 D(P^{(\mathcal{C}_n)}_{\bm{V},1} || Q_V^n) &\triangleq \sum_{\bm{v} \in \mathcal{V}^n} P^{(\mathcal{C}_n)}_{\bm{V},1}(\bm{v}) \log_2 \Delta^{(\mathcal{C}_n)}_{\bm{V},1} \nonumber \\
 &< \sum_{\bm{v}\in \mathcal{V}^n} P^{(\mathcal{C}_n)}_{\bm{V},1} \log_2 (1+2^{-\pi_1 n}) \nonumber \\
 & \leq \log_2(1+2^{-\pi_1 n}) \leq 2^{-\pi_1 n} \log_2 e. \label{eq:suff_5}
 \end{align}
 Moving to the last term, we will use the following inequality which uses the assumption that $|\mathcal{V}|$ is finite: $\Delta^{(\mathcal{C}_n)}_{\bm{V},2}(\bm{v}) \triangleq \frac{P^{(\mathcal{C}_n)}_{\bm{V},2}(\bm{v})}{Q^n_{V}(\bm{v})} \leq \max\limits_{\bm{v}' \in \mathrm{supp}(Q^n_V)} \frac{1}{Q^n(\bm{v}')} = (\max\limits_{v' \in \mathrm{supp}(Q_V)} \frac{1}{Q(v')})^n$ for all $\bm{v} \in \mathcal{V}^n$. Following this inequality and (\ref{eq:suff_1}), we have that 
 \begin{align}
 &D(P^{(\mathcal{C}_n)}_{\bm{V},2} || Q^n_{V})  \triangleq \sum_{\bm{v} \in \mathcal{V}^n} P^{(\mathcal{C}_n)}_{\bm{V},2}(\bm{v}) \log_2 \Delta^{(\mathcal{C}_n)}_{\bm{V},2} \nonumber \\
 & \hspace{2em} \leq \sum_{\bm{v} \in \mathcal{V}^n} P^{(\mathcal{C}_n)}_{\bm{V},2}(\bm{v}) n \log_2 \left( \max_{v' \in \mathrm{supp}(Q_V)} \frac{1}{Q_V(v')} \right)\nonumber \\
 & \hspace{2em} < 2^{-\pi_1 n} n \log_2 \left( \max_{v' \in \mathrm{supp}(Q_V)} \frac{1}{Q_V(v')} \right). \label{eq:suff_6}
 \end{align}
 Combining the bounds (\ref{eq:suff_4}), (\ref{eq:suff_5}) and (\ref{eq:suff_6}) together with Lemma Lemma \ref{thm:RE_UB}, the desired inequality (\ref{eq:suff_3}) immediately follows.
 \end{proof}
 }{A proof of Lemma \ref{thm:suff_sscl_LI} is available in the extended version \cite{Ruzomberka2023_2}.}

 In the remainder of the proof of Lemma \ref{thm:sscl_LI}, we apply the framework of the sufficient condition (Lemma \ref{eq:suff_2}) and show that inequalities (\ref{eq:suff_1}) and (\ref{eq:suff_2}) hold with probability $1-\pi_0$ over the distribution of $\mathscr{C}_n$ for a value $\pi_0 = 2^{-k\Omega(n) + n \log_2|\mathcal{V}|}$ and some $\pi_1 > 0$. As the primary technical tools of the proof, we use the concentration inequalities of Schmidt, Siegel and Srinivasan \cite{Schmidt1995} and Bellare and Rompel \cite{Bellare1994} for sums of $k$-wise independent random variables. 

 \subsection{Proof of Lemma \ref{thm:sscl_LI}}
 First, we show that inequality (\ref{eq:suff_1}) holds with high probability over the random code $\mathscr{C}_n$ for some $\pi_1>0$. Consider the quantity 
 \begin{align}
 &\sum_{\bm{v} \in \mathcal{V}^n} P^{(\mathscr{C}_n)}_{\bm{V},2}(\bm{v}) \nonumber \\
 &= \Scale[0.93]{\sum\limits_{w \in \mathcal{W}} 2^{-R'n} \sum\limits_{\bm{v} \in \mathcal{V}^n} Q^n_{V|U}(\bm{v}|\bm{U}(w,\mathscr{C}_n)) \mathds{1} \left\{ \left(\bm{U}(w,\mathscr{C}_n),\bm{v}\right) \not\in \mathcal{A}_\epsilon \right\}} \nonumber \\
 &= \Scale[0.93]{\sum\limits_{w \in \mathcal{W}} 2^{-R'n} \mathbb{P}_{\bm{V} \sim Q^n_{V|U}} \left( (\bm{U}(w,\mathscr{C}_n),\bm{V}) \not\in \mathcal{A}_\epsilon \Big| \bm{U} = \bm{U}(w,\mathscr{C}_n)\right)} \label{eq:P2_exp}
 \end{align}
 Note that (\ref{eq:P2_exp}) is a sum of $|\mathcal{W}|=2^{R'n}$ $k$-wise-independent terms following that the codewords of $\mathscr{C}_n$ are $k$-wise independent. 

 For $w \in \mathcal{W}$, the expectation of the $w^{\text{th}}$ term in the sum of (\ref{eq:P2_exp}) is
 \begin{align}
 &2^{-R'n} \mathbb{E}_{\mathscr{C}_n} \mathbb{P}_{\bm{V} \sim Q^n_{V|U}} \left( (\bm{U}(w,\mathscr{C}_n),\bm{V}) \not\in \mathcal{A}_\epsilon \Big| \bm{U} = \bm{U}(w,\mathscr{C}_n)\right) \nonumber \\
 & \stackrel{(\text{a})}{=} 2^{-R'n} \mathbb{P}_{(\bm{U},\bm{V}) \sim Q^n_{U,V}} \left( (\bm{U},\bm{V}) \not\in \mathcal{A}_\epsilon \right) \nonumber \\
 & \stackrel{(\text{b})}{=} 2^{-R'n} \mathbb{P}_{(\bm{U},\bm{V}) \sim Q^n_{U,V}} \left(i_{Q^n_{U,V}}(\bm{U};\bm{V}) \geq (I(U;V)+\epsilon)n \right) \nonumber \\
 & \stackrel{(\text{c})}{=} 2^{-R'n} \mathbb{P}_{(\bm{U},\bm{V}) \sim Q^n_{U,V}} \left( 2^{\lambda i_{Q^n_{U,V}}(\bm{U};\bm{V})} \geq 2^{\lambda(I(U;V)+\epsilon)n} \right) \nonumber \\
 & \stackrel{(\text{d})}{\leq} 2^{-R'n} \left( \frac{\mathbb{E}_{(U,V) \sim Q_{U,V}}\left[ 2^{\lambda i_{Q_{U,V}}(U;V)}\right]}{2^{\lambda(I(U;V)+\epsilon)}} \right)^n \nonumber \\
 & = 2^{-\lambda \left( I(U;V) + \epsilon - \frac{1}{\lambda} \log_2 \mathbb{E}_{(U,V) \sim Q_{U,V}} \left[2^{\lambda i_{Q_{U,V}}(U;V)} \right] \right)n - R'n} \nonumber \\
 & \stackrel{(\text{e})}{=} 2^{-\lambda \left( I(U;V) + \epsilon - D_{\lambda+1}(Q_{U,V}||Q_U Q_V) \right)n - R'n} \nonumber \\
 & = 2^{-(\alpha_{\lambda,\epsilon} + R')n} \label{eq:beta_bd}
 \end{align}
 where (a) follows from the fact that $\bm{U}(w,\mathscr{C}_n)$ is distributed as $Q^n_U$, (b) follows from the definition of $\mathcal{A}_\epsilon$, (c) holds for any $\lambda > 0$, (d) follows from Markov's inequality and the product form of the joint PMF $Q^n_{U,V}$, (e) follows from the definition of R\'{e}nyi divergence of order $\lambda+1$, and where $\alpha_{\lambda,\epsilon} \triangleq \lambda \left( I(U;V) + \epsilon - D_{\lambda+1}(Q_{U,V}||Q_U Q_V) \right)$. 
 
 For $\epsilon>0$, we remark that i) $\alpha_{\lambda,\epsilon}$ tends to $0$ as $\lambda$ tends to $0$, and ii) $\alpha_{\lambda,\epsilon}$ is positive for small enough $\lambda>0$; these follow from the facts that $D_{\lambda+1}(Q_{U,V}||Q_{U}Q_{V})$ is a continuous and non-decreasing function of $\lambda>0$ and that $D_1(Q_{U,V}||Q_{U}Q_{V}) = I(U;V)$. In the sequel, for a given $\epsilon>0$, we let $\lambda>0$ be small enough such that $\alpha_{\lambda,\epsilon} \in (0,R')$. Moving forward, we write $\alpha_{\lambda,\epsilon}$ as simply $\alpha$ when the dependency on $\lambda$ and $\epsilon$ is clear from context.

 \begin{lemma}[{\cite[Theorem 3]{Schmidt1995}}] \label{thm:schmidt}
 Suppose that $\{T_w\}_{w \in \mathcal{W}}$ are random variables that take values in $[0,1]$, and define $T \triangleq \sum_{w \in \mathcal{W}} T_w$ and $\mu \triangleq \mathbb{E}[T]$. For $\tau > 0$, if the variables are $k$-wise independent for some $k \geq k^*(|\mathcal{W}|,\mu,\tau) \triangleq \lceil \frac{\mu \tau}{1 - \frac{\mu}{|W|}} \rceil$, then
 \begin{equation} \nonumber
 \mathbb{P} \left( T \geq \mu(1+\tau) \right) \leq \frac{ {|\mathcal{W}| \choose k^*} \left(\frac{\mu}{|\mathcal{W}|} \right)^{k^*}}{{\mu(1+\tau) \choose k^*}}.
 \end{equation}
 \end{lemma}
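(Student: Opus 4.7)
The plan is to apply Markov's inequality to the $k$-th factorial moment of $T$ with $k = k^*$. Under only $k$-wise independence the classical Chernoff approach is unavailable, since it requires control of exponential moments, which depend on the full joint distribution of $(T_w)_w$. The $k$-th factorial moment $\mathbb{E}[T(T-1)\cdots(T-k+1)] = k! \, \mathbb{E}\binom{T}{k}$, on the other hand, expands into a sum over $k$-tuples of indices and, after collecting terms supported on distinct indices, depends only on joint expectations of at most $k$ of the $T_w$'s --- which is precisely what $k$-wise independence allows us to evaluate.

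First, I would verify that $x \mapsto \binom{x}{k^*}$ is non-negative and monotone increasing on $[k^*-1, \infty)$, and that the threshold choice $k^* = \lceil \mu \tau / (1 - \mu/|\mathcal{W}|) \rceil$ is precisely the minimum value making $\mu(1+\tau) \geq k^* - 1$. This justifies
\[
\mathbb{P}\!\left(T \geq \mu(1+\tau)\right) \leq \mathbb{P}\!\left(\binom{T}{k^*} \geq \binom{\mu(1+\tau)}{k^*}\right) \leq \frac{\mathbb{E}\binom{T}{k^*}}{\binom{\mu(1+\tau)}{k^*}}
\]
via Markov. Second, I would compute the numerator. In the clean $\{0,1\}$ case there is the exact identity $\binom{T}{k^*} = \sum_{|S| = k^*} \prod_{w \in S} T_w = e_{k^*}(T_1, \ldots, T_{|\mathcal{W}|})$, so by $k^*$-wise independence $\mathbb{E}\binom{T}{k^*} = e_{k^*}(\mathbb{E}T_1, \ldots, \mathbb{E}T_{|\mathcal{W}|})$; Maclaurin's inequality then bounds this by $\binom{|\mathcal{W}|}{k^*} (\mu/|\mathcal{W}|)^{k^*}$, yielding exactly the stated tail bound. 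For general $[0,1]$-valued $T_w$, I would reduce to the 0/1 case by introducing conditionally independent Bernoullis $B_w$ of parameter $T_w$ (given $(T_w)_w$), noting that the $B_w$'s remain $k$-wise independent, share the same marginal means as the $T_w$'s, and make the exact identity applicable; a standard coupling or convexity argument then transfers the tail bound back to $T$.

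\textbf{Main obstacle.} The hard part will be the passage from the clean $\{0,1\}$ case to general $[0,1]$-valued variables: $\binom{T}{k^*}$ need not be non-negative for non-integer $T$, and the elementary-symmetric-polynomial identity fails literally. The randomized-rounding reduction must carefully preserve both the $k$-wise independence structure and the tail event of interest, which is delicate. A secondary subtlety is that the definition of $k^*$ is exactly tight for the monotonicity step --- any smaller $k$ would break the validity of applying Markov to $\binom{T}{k}$ and would require a different, likely looser, argument. Handling these two technicalities together is what makes the Schmidt--Siegel--Srinivasan bound nontrivial despite the apparent simplicity of the underlying $k$-th moment method.
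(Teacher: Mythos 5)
This lemma is not proved in the paper at all: it is imported verbatim as \cite[Theorem 3]{Schmidt1995} and used as a black box in the proof of Lemma~\ref{thm:sscl_LI}, so there is no in-paper argument to compare against. Judged on its own, your outline is essentially the original Schmidt--Siegel--Srinivasan route: bound the tail by Markov applied to a $k^*$-th binomial-type moment, use $k^*$-wise independence to factor its expectation, and bound it by $\binom{|\mathcal{W}|}{k^*}(\mu/|\mathcal{W}|)^{k^*}$ via Maclaurin/AM--GM, with the hypothesis $k\ge k^*$ guaranteeing $\mu(1+\tau)\ge k^*-1$ so that $\binom{\mu(1+\tau)}{k^*}>0$ and the monotonicity step is legitimate. (One small correction: $k^*$ is not ``precisely the minimum value'' making $\mu(1+\tau)\ge k^*-1$; $\lceil\mu\tau\rceil$ already suffices for that whenever $\mu(1+\tau)\le|\mathcal{W}|$. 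The factor $(1-\mu/|\mathcal{W}|)^{-1}$ is there because SSS choose $k^*$ to approximately optimize the resulting ratio, not to enable the Markov step.)

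The genuinely delicate point is the one you flag but leave vague: the passage from $\{0,1\}$- to $[0,1]$-valued summands. Two cautions. First, you cannot apply Markov to $\binom{T}{k^*}$ directly in the general case, exactly because it is negative for non-integer $T<k^*-1$. Second, a plain coupling transfer from the rounded sum $S_B=\sum_w B_w$ back to $T$ does not yield the stated inequality: conditioned on $T\ge\mu(1+\tau)$, the event $S_B\ge\mu(1+\tau)$ holds only with probability bounded away from $1$, so a coupling argument loses a constant factor or forces a threshold shift. What does close the gap: (i) drop the rounding and apply Markov to the elementary symmetric polynomial $e_{k^*}(T_1,\dots,T_{|\mathcal{W}|})$ itself, which is non-negative, has expectation $e_{k^*}(\mathbb{E}T_1,\dots,\mathbb{E}T_{|\mathcal{W}|})\le\binom{|\mathcal{W}|}{k^*}(\mu/|\mathcal{W}|)^{k^*}$ under $k^*$-wise independence, and satisfies the deterministic inequality $e_{k^*}(x)\ge\binom{s}{k^*}$ for $x\in[0,1]^{|\mathcal{W}|}$ with $s=\sum_w x_w\ge k^*-1$ (minimize the multilinear $e_{k^*}$ over the box slice, so the minimizer has at most one fractional coordinate, then use convexity of $x\mapsto\binom{x}{k^*}$ on $[k^*-1,\infty)$ together with Pascal's rule); this is the original SSS argument; or (ii) keep your Bernoulli rounding but transfer via conditional expectation rather than coupling: given $(T_w)_w$ one has $\mathbb{E}\bigl[\binom{S_B}{k^*}\mid(T_w)_w\bigr]=e_{k^*}(T_1,\dots,T_{|\mathcal{W}|})\ge\binom{T}{k^*}$, and $\binom{S_B}{k^*}\ge0$ since $S_B$ is integer-valued, so $\mathbb{P}(T\ge\mu(1+\tau))\binom{\mu(1+\tau)}{k^*}\le\mathbb{E}\bigl[\binom{S_B}{k^*}\bigr]=e_{k^*}(\mathbb{E}T_1,\dots,\mathbb{E}T_{|\mathcal{W}|})$. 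With either completion your plan yields exactly the stated bound; as written, the ``standard coupling'' half of your proposed fix is the step that would fail.
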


 Using the framework of Lemma \ref{thm:schmidt}, we set $T_w$ for each $w \in \mathcal{W}$ to be the $w^{\text{th}}$ term in the sum of (\ref{eq:P2_exp}), i.e.,
 \begin{equation} \nonumber
 \Scale[0.95]{T_w = 2^{-R'n} \mathbb{P}_{\bm{V} \sim Q^n_{V|U}} \left( (\bm{U}(w,\mathscr{C}_n),\bm{V}) \not\in \mathcal{A}_\epsilon | \bm{U} = \bm{U}(w,\mathscr{C}_n) \right),}
 \end{equation}
 and in turn, we have that $T \triangleq \sum_{w \in T_w} T_w = \sum_{\bm{v} \in \mathcal{V}^n} P^{(\mathscr{C}_n)}_{\bm{V},1}(\bm{v})$. Note that the expectation $\mu \triangleq \mathbb{E}_{\mathscr{C}_n} [T]$ is bounded above by $2^{-\alpha n}$ following (\ref{eq:beta_bd}). For a parameter $\beta \in (0,\alpha)$ that will be set later, set $\tau$ such that $\mu(1+\tau) = 2^{(\beta-\alpha)n}$.

 Before applying Lemma \ref{thm:schmidt}, we normalize the random variables $\{T_w\}_{w \in \mathcal{W}}$ to optimize the parameter $k^*$. For some parameter $\theta \in (0,1]$ which we will soon set, define $T'_{w} = \theta 2^{R'n} T_w$ and note that $T'_w \in [0,1]$. Similarly, define the normalized sum $T' = \theta 2^{R'n} T $, its normalized expectation $\mu' = \theta 2^{R'n} \mu$ which is bounded above by  $\theta 2^{(R'-\alpha)n}$, and note that $\mu'(1+\tau)= \theta 2^{(R'+\beta-\alpha)}$. Now consider the quantity $k^*(|\mathcal{W}|,\mu',\tau)$ as a function of $\theta$, and let $n$ be large enough and choose $\theta \in (0,1]$ such that $k^*(|\mathcal{W}|,\mu',\tau)$ is equal to $k$; such a choice exists for fixed $k$ and large enough $n$ since $k^*(|\mathcal{W}|,\mu',\tau) \geq \mu' \tau = \theta 2^{(R'+\beta-\alpha)n} - \mu' \geq \theta 2^{(R'-\alpha)n}(2^{\beta n}-1)$ is tending larger than $k$ for fixed $\theta > 0$ as $n$ tends to infinity following $\alpha < R'$.  

 We apply Lemma \ref{thm:schmidt} to the normalized random variables $\{T'_w\}_{w \in \mathcal{W}}$. We have for large enough $n$
 \begin{align}
 \Scale[0.90]{\mathbb{P}_{\mathscr{C}_n} \left( \sum\limits_{\bm{v} \in \mathcal{V}^n} P^{(\mathscr{C}_n)}_{\bm{V},2}(\bm{v}) \geq 2^{(\beta-\alpha)n} \right)} &=
 \mathbb{P}_{\mathscr{C}_n} \left( T \geq 2^{(\beta - \alpha)n}\right) \nonumber \\
  &\stackrel{(\text{f})}{=} \mathbb{P}_{\mathscr{C}_n}\left(T' \geq \theta 2^{(R+\beta-\alpha)n}\right) \nonumber \\
 &\stackrel{(\text{g})}{\leq} \frac{{2^{R'n} \choose k} \left( \frac{\mu'}{2^{R'n}}\right)^k}{{\theta 2^{(R'+\beta-\alpha)n} \choose k}} \nonumber \\
 & \stackrel{(\text{h})}{\leq} \frac{k^k}{k!} \left( \frac{\mu'}{\theta 2^{(R'+\beta-\alpha)n}} \right)^k  \nonumber \\
 &  \stackrel{(\text{i})}{\leq} \frac{k^k}{k!} 2^{-k \beta n} \label{eq:PV2_1}
 \end{align}
 where (f) follows from the normalization $T' = \theta 2^{R'n} T$, ($g$) follows for large enough $n$ from Lemma \ref{thm:schmidt} and the choice of $\theta$ such that $k^*=k$, (h) follows from the inequalities $\frac{m^k}{k^k}\leq {m \choose k} \leq \frac{m^k}{k!}$ for any $1 \leq k \geq m$, and (i) follows from the bound $\mu' \leq \theta 2^{(R'- \alpha)}$.

 Next, we show that inequality (\ref{eq:suff_2}) holds with high probability over the random code $\mathscr{C}_n$. For $\bm{v} \in \mathcal{V}^n$, expand $\Delta^{(\mathscr{C}_n)}_{\bm{V},1}(\bm{v})$:
 \begin{align}
 &\Delta^{(\mathscr{C}_n)}_{\bm{V},1}(\bm{v}) \triangleq \frac{P^{(\mathscr{C}_n)}_{\bm{V},1}(\bm{v})}{Q^n_{V}(\bm{v})}  \nonumber\\
 &= \sum_{w \in \mathcal{W}} 2^{-R'n} \frac{Q^n_{V|U}(\bm{v} \big| \bm{U}(w,\mathscr{C}_n))}{Q^n_{V}(\bm{v})} \mathds{1} \left\{ \left(\bm{U}(w,\mathscr{C}_n),\bm{v} \right) \in \mathcal{A}_\epsilon \right\}.  \label{eq:D1_exp}
 \end{align}
Note that (\ref{eq:D1_exp}) is a sum of $|\mathcal{W}|=2^{R'n}$ $k$-wise independent terms following that the codewords of $\mathscr{C}_n$ are $k$-wise independent. For $w \in \mathcal{W}$, the expectation of the $w^{\text{th}}$ term in the sum of (\ref{eq:D1_exp}) is 
\begin{align}
& 2^{-R'n} \mathbb{E}_{\mathscr{C}_n} \left[ \frac{Q^n_{V|U}(\bm{v}|\bm{U}(w,\mathscr{C}_n))}{Q^n_V(\bm{v})} \mathds{1} \left\{ \left(\bm{U}(w,\mathscr{C}_n),\bm{v} \right) \in \mathcal{A}_{\epsilon} \right\} \right] \nonumber \\
& \stackrel{(\text{j})}{\leq} 2^{-R'n} \mathbb{E}_{\mathscr{C}_n} \left[ \frac{Q^n_{V|U}(\bm{v}|\bm{U}(w,\mathscr{C}_n))}{Q^n_V(\bm{v})} \right] \nonumber \\
&  \stackrel{(\text{k})}{=} 2^{-R' n} \sum_{\bm{u} \in \mathcal{U}^n} Q^n_{U}(\bm{u}) \frac{Q^n_{V|U}(\bm{v}|\bm{u})}{Q^n_V(\bm{v})} \nonumber \\
& = 2^{-R'n} \label{eq:ED1}
\end{align}
where (j) follows from the trivial bound $\mathds{1}\{\cdot\} \leq 1$ and (k) follows from the distribution of codeword $\bm{U}(w,\mathscr{C}_n) \sim Q^n_U$.
 
 \begin{lemma}[{\cite[Lemma 2.3]{Bellare1994}}] \label{thm:Bellare}
  Let $k \geq 4$ be an even integer. Suppose that $\{T_w\}_{w \in \mathcal{W}}$ are $k$-wise independent random variables that take values in $[0,1]$, and define $T \triangleq \sum_{w \in \mathcal{W}} T_w$ and $\mu \triangleq \mathbb{E}[T]$. For any $\tau > 0$,
  \begin{equation} \nonumber
  \mathbb{P}(T \geq \mu(1+\tau)) \leq 8 \left( \frac{k \mu + k^2}{(\mu \tau)^2} \right)^{k/2}.
  \end{equation}
 \end{lemma}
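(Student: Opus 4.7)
My plan is to follow the standard $k$-th moment method, which is the route taken in the original Bellare--Rompel paper. First I would center the variables: let $Y_w \triangleq T_w - \mathbb{E}[T_w]$, so that $Y \triangleq T - \mu = \sum_{w \in \mathcal{W}} Y_w$ has mean zero, the $Y_w$ lie in $[-1,1]$, and $k$-wise independence is preserved under the shift. Since $k$ is even, $Y^k \geq 0$, so Markov's inequality gives
\begin{equation} \nonumber
\mathbb{P}(T \geq \mu(1+\tau)) = \mathbb{P}(Y \geq \mu \tau) \leq \mathbb{P}(Y^k \geq (\mu\tau)^k) \leq \frac{\mathbb{E}[Y^k]}{(\mu\tau)^k}.
\end{equation}
The whole game is then to upper bound the $k$-th central moment $\mathbb{E}[Y^k]$ by something of order $(k\mu + k^2)^{k/2}$, and combine.

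Next I would expand
\begin{equation} \nonumber
\mathbb{E}[Y^k] = \sum_{(w_1,\ldots,w_k) \in \mathcal{W}^k} \mathbb{E}\!\left[ Y_{w_1} \cdots Y_{w_k} \right],
\end{equation}
and use $k$-wise independence to kill most of the terms. Any multi-index in which some value appears exactly once factors as a product whose first factor is $\mathbb{E}[Y_w] = 0$, so it contributes nothing. Hence only multi-indices in which every distinct coordinate appears with multiplicity at least $2$ survive; in particular the number of distinct values is at most $j \leq k/2$. For such a surviving multi-index with distinct values $v_1,\ldots,v_j$ and multiplicities $a_1,\ldots,a_j \geq 2$ (with $\sum a_i = k$), the key estimate is that $|Y_w| \leq 1$ implies $|\mathbb{E}[Y_w^{a}]| \leq \mathbb{E}[Y_w^2] = \mathrm{Var}(T_w)$ for every $a \geq 2$. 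Therefore
\begin{equation} \nonumber
\bigl| \mathbb{E}[Y_{w_1} \cdots Y_{w_k}] \bigr| \leq \prod_{i=1}^j \mathrm{Var}(T_{v_i}),
\end{equation}
and summing over all choices of $(v_1,\ldots,v_j)$ produces at most $\bigl(\sum_w \mathrm{Var}(T_w)\bigr)^{j} \leq \mu^j$, where the final inequality uses $\mathrm{Var}(T_w) \leq \mathbb{E}[T_w]$ (valid because $T_w \in [0,1]$) to replace $\sum_w \mathrm{Var}(T_w)$ by $\mu$.

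What remains is purely combinatorial: group the surviving multi-indices by the set partition of $[k]$ they induce. Let $N(k,j)$ denote the number of partitions of $[k]$ into exactly $j$ blocks, each of size at least $2$. The above then gives
\begin{equation} \nonumber
\mathbb{E}[Y^k] \leq \sum_{j=1}^{k/2} N(k,j)\, \mu^j.
\end{equation}
I would then invoke the standard estimate $N(k,j) \leq \binom{k}{2j}(2j)!/(j!\, 2^j) \cdot j^{k-2j}$ (partition $[k]$ by first choosing which $2j$ coordinates fill the ``doubled'' slots of each block and then assigning the remaining $k-2j$ coordinates to the $j$ blocks), and massage this via $\binom{k}{2j} \leq k^{2j}/(2j)!$ and Stirling-type inequalities $j! \geq (j/e)^j$ to obtain $N(k,j)\, \mu^j \leq (C\, k\mu)^j k^{k-2j}$ for an absolute constant $C$. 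Summing the geometric-type series in $j$ from $1$ to $k/2$ and using $\max_j (k\mu)^j k^{k-2j} \leq (k\mu + k^2)^{k/2}$ produces $\mathbb{E}[Y^k] \leq 8 \bigl(k\mu + k^2\bigr)^{k/2}$ after tracking the leading constants (the factor $8$ coming from bounding both the number of partition shapes and the geometric sum by small absolute constants). Substituting into the Markov bound above yields exactly
\begin{equation} \nonumber
\mathbb{P}(T \geq \mu(1+\tau)) \leq \frac{8 (k\mu + k^2)^{k/2}}{(\mu\tau)^k} = 8 \left( \frac{k\mu + k^2}{(\mu\tau)^2} \right)^{k/2}.
\end{equation}
The main obstacle I anticipate is exactly the combinatorial counting of $N(k,j)$ with the correct constants so that the absolute prefactor comes out to $8$ rather than something larger; getting the sharpened constant requires slightly more careful bookkeeping of the partition-counting bound than the quick Stirling estimate I sketched, but the structural argument is unchanged.
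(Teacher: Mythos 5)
This lemma is not proved in the paper at all: it is quoted from Bellare and Rompel \cite[Lemma 2.3]{Bellare1994}, so there is no in-paper proof to compare against; the relevant comparison is with the original reference, and your route --- centering, Markov applied to the even $k$-th central moment, using $k$-wise independence to kill every product containing a singleton index, the bound $|\mathbb{E}[Y_w^{a}]|\le \mathbb{E}[Y_w^2]=\mathrm{Var}(T_w)\le\mathbb{E}[T_w]$ for $a\ge 2$, and the reduction to counting partitions of $[k]$ into blocks of size at least two --- is exactly the standard moment argument used there. All of these structural steps are correct, and in particular $\mathbb{E}[Y^k]\le\sum_{j=1}^{k/2}N(k,j)\,\mu^j$ is right (note only that $k$-wise independence is what licenses factoring each surviving product, since it involves at most $k$ distinct indices).

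The one real shortfall is the final combinatorial step, and it is more than ``tracking constants'': the chain you display does not yield the absolute prefactor $8$. Replacing $\binom{k}{2j}$ by $k^{2j}/(2j)!$ overcounts by a factor of order $e^{k}$ when $j$ is near $k/2$ (at $j=k/2$ your bound evaluates to roughly $k^{k/2}e^{k/2}$, while the true count $(k-1)!!$ is about $(k/e)^{k/2}$), and the claimed intermediate estimate $N(k,j)\mu^j\le (Ck\mu)^j k^{k-2j}$ does not follow from it. What your bounds do give, comparing term-by-term with the expansion $(k\mu+k^2)^{k/2}=\sum_{j}\binom{k/2}{j}(k\mu)^j k^{k-2j}$ and using $\binom{k/2}{j}\ge (k/(2j))^j$, is $\sum_j N(k,j)\mu^j\le e^{k/2}(k\mu+k^2)^{k/2}$, i.e.\ the lemma with $8$ replaced by a $k$-dependent constant (equivalently, $k\mu+k^2$ inflated by a constant factor). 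To recover the stated constant one must keep $\binom{k}{2j}=\frac{k!}{(2j)!\,(k-2j)!}$, i.e.\ bound $N(k,j)\le \frac{k!}{(k-2j)!\,j!\,2^{j}}\,j^{k-2j}$, and do the term-by-term comparison without the lossy Stirling shortcuts, as in the cited source. For the way the lemma is used in this paper (only a $2^{-k\Omega(n)}$ probability bound is needed, with prefactors constant in $n$), your weaker constant would in fact suffice, but as a proof of the lemma as stated the constant-$8$ bookkeeping is missing, exactly at the point you flagged.
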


 Using the framework of Lemma \ref{thm:Bellare}, fix $\bm{v} \in \mathcal{V}^n$ and set $T_w$ for each $w\in \mathcal{W}$ to be
 \begin{equation} \nonumber
 \Scale[0.94]{T_w = 2^{(-I(U;V)-\epsilon)n} \left(\frac{Q^n_{V|U}(\bm{v} | \bm{U}(w,\mathscr{C}_n))}{Q^n_{V}(\bm{v})} \right) \mathds{1} \left\{ \left(\bm{U}(w,\mathscr{C}_n),\bm{v} \right) \in \mathcal{A}_\epsilon \right\}}
 \end{equation}
 which coincides with the $w^{\text{th}}$ term in the sum of (\ref{eq:D1_exp}) normalized by the factor $2^{(R' - I(U;V) - \epsilon)n}$. This normalization factor was chosen to ensure $T_w$ is bounded above by $1$ which follows from that fact that for any $(\bm{u},\bm{v}) \in \mathcal{A}_{\epsilon}$ we have that $\frac{Q^n_{V|U}(\bm{v}|\bm{u})}{Q^n_V(\bm{v})} < 2^{(I(U;V)+\epsilon)n}$. Set $T = \sum_{w \in \mathcal{W}}T_w$ and note that $\mu \triangleq \mathbb{E}_{\mathscr{C}_n}[T]$ is bounded above by $2^{(R' - I(U;V)-\epsilon)n}$ following (\ref{eq:ED1}) and the choice of normalization factor. Finally, set $\tau$ such that $\mu(\tau+1) = 2^{(R' - I(U;V)-\epsilon)n}(1+2^{(\beta-\alpha)n})$ and note that $\mu \tau = 2^{(R'-I(U;V)-\epsilon)n}(1+2^{(\beta-\alpha)n}) - \mu \geq 2^{(R'-I(U;V)-\epsilon+\beta-\alpha)n}$. Applying Lemma \ref{thm:Bellare}, we have that for for even integer $k \geq 4$, small enough $\epsilon>0$ and large enough $n$
 \begin{align}
 \mathbb{P}_{\mathscr{C}_n} \left( \Delta^{(\mathscr{C}_n)}_{\bm{V},1}(\bm{v}) \geq 1 + 2^{(\beta-\alpha)n}\right) &= \mathbb{P}_{\mathscr{C}_n}\left( T \geq \mu(1+\tau)\right) \nonumber \\
 & \hspace{-6em} \stackrel{(\ell)}\leq 8 \left( \frac{k 2^{(R' - I(U;V)-\epsilon)n} + k^2}{2^{2(R'-I(U;V)-\epsilon+\beta-\alpha)n}} \right)^{k/2} \nonumber \\
 & \hspace{-6em} \stackrel{(\text{m})}{\leq} 8 \left( \frac{(k+1) 2^{(R' - I(U;V)-\epsilon)n}}{2^{2(R'-I(U;V)-\epsilon+\beta-\alpha)n}} \right)^{k/2} \nonumber \\
 & \hspace{-6em} = 8 (k+1)^{k/2}  \cdot 2^{-k \eta n}. \label{eq:DV1_1}
 \end{align}
 where ($\ell$) follows from Lemma \ref{thm:Bellare} and the bounds $\mu \leq 2^{(R'-I(U;V)-\epsilon)n}$ and $\mu \tau \geq 2^{(R'-I(U;V)-\epsilon+\beta-\alpha)n}$, and (m) follows for small enough $\epsilon>0$ and large enough $n$ such that $k 2^{(R'-I(U;V)-\epsilon)n} >> k^2$, and where
 
 \begin{align}
 &\eta = \frac{R'-I(U;V)-\epsilon +2(\beta-\alpha)}{2} \label{eq:eta}
 \end{align}
 In turn, by a simple union bound over all $\bm{v} \in \mathcal{V}^n$, and by letting $k \geq 4$ be an even integer, $\epsilon>0$ be small enough and $n$ be large enough,
 \begin{equation} \label{eq:DV1_2}
 \begin{aligned} 
 &\mathbb{P}_{\mathscr{C}_n}\left( \exists \bm{v} \in \mathcal{V}^n \text{ s.t. } \Delta^{(\mathscr{C}_n)}_{\bm{V},1}(\bm{v}) \geq 1 + 2^{(\beta-\alpha)n} \right) \\
 & \hspace{2em} \leq 8k (k+1)^{k/2} \cdot  2^{-(k \eta_1 +\log_2|\mathcal{V}|)n}.
 \end{aligned}
 \end{equation}

 To complete the proof, we put together the above results and apply the sufficient condition (Lemma \ref{thm:sscl_LI}). In the framework of Lemma \ref{thm:sscl_LI}, we set $\pi_1 = \alpha - \beta$. If $\pi_1>0$, then it follows from Lemma \ref{thm:sscl_LI} that the inequalities (\ref{eq:suff_1}) and (\ref{eq:suff_2}) hold with probability at least $1-\pi_0$ where
 $$\pi_0 = \frac{k^k}{k!}2^{-k \beta n} + 8k (k+1)^{k/2} \cdot 2^{(-k\eta+\log_2 |\mathcal{V}|)n}$$ where the expression for $\pi_0$ follows from (\ref{eq:PV2_1}) and (\ref{eq:DV1_2}) together with a simple union bound. 
 
 The last step is to show that for some choice of the free parameters $\epsilon>0$, $\lambda>0$ and $\beta \in (0,\alpha)$ we have that $\pi_1 > 0$ and $\pi_0 = 2^{-k \Omega(n) + n \log_2 |\mathcal{V}|}$. Recall that for a fixed $\epsilon>0$, $\alpha = \alpha_{\lambda,\epsilon}$ tends to $0$ as $\lambda$ tends to $0$, and $\alpha_{\lambda,\epsilon}$ is positive for small enough $\lambda>0$. Furthermore, recall that $R' > I(U;V)$ by assumption, and thus, $\eta$ given by (\ref{eq:eta}) is positive for small enough $\epsilon>0$, small enough $\alpha_{\lambda,\epsilon}>0$, and any $\beta \in (0,\alpha_{\lambda,\epsilon})$.  Thus, given even $k\geq 4$, we can pick $\epsilon>0$ small enough, and in turn, pick $\lambda>0$ small enough such that both $\alpha_{\lambda,\epsilon}$ and $\eta_1$ are positive. In turn, picking $\beta \in (0,\alpha_{\lambda,\epsilon})$ ensures that $\alpha_{\lambda,\epsilon} - \beta >0$ and thus $\pi_1>0$. Thus, $\pi_0 = 2^{-k\Omega(n) + \log_2 |\mathcal{V}|}$. This completes the proof of Lemma \ref{thm:suff_sscl_LI}. 
 
 

 \section{Proof of Theorem \ref{thm:secret}} \label{sec:secret_proof}

\textit{Setup:} Let $p \in [0,1/2]$ and $r \in [0,1]$ such that $1 - H_2(p)- r > 0$. For $\epsilon>0$ and $\epsilon' \in (0,\epsilon)$, let $R = 1 - H_2(p) - r - \epsilon$ and $R' = r + \epsilon'$. Let $k$ be a positive integer to be set in the proof. The goal of the proof is to show that for large enough $k$ constant in $n$ and for large enough $n$, there exists an $[n,Rn,R'n,k]$ pseudolinear code $\mathcal{C}_n$ such that both $\mathrm{Sem}(\mathcal{C}_n) = 2^{-\Omega(n)}$ and $P^{\mathrm{max}}_{\mathrm{error}}(\mathcal{C}_n) = o(1)$. 

\textit{Encoding:} Alice uses an $[n,Rn,R'n]$ code $\mathcal{C}_n = \{ \bm{x}(m,w) \}_{(m,w) \in \mathcal{M} \times \mathcal{W}}$ to encode her message $M$. That is, for a message distribution $P_M \in \mathcal{P}(\mathcal{M})$, Alice draws $M \sim P_M$ and $W \sim \mathrm{Unif}(\mathcal{W})$ and transmits $\bm{x}(M,W)$. 

\textit{Decoding:} Upon receiving the channel output $\bm{y}$, Bob performs min-distance decoding by choosing the message estimate $\widehat{m}$ and key estimate $\widehat{w}$ such that $$(\widehat{m},\widehat{w}) = \arg \min_{(m,w) \in \mathcal{M} \times \mathcal{W}} d_H\left(\bm{x}(m,w),\bm{y} \right)$$
where $d_H$ denotes the Hamming distance.



\subsection{Code Distribution}

We show the existence of a good code via a random coding argument. As our random code distribution, we will use the following distribution over $[n,Rn,R'n,k]$ pseudolinear codes.

\begin{definition}[Random Code Dist.]
Let $F[n,Rn,R'n,k]$ be the distribution over all $[n,Rn,R'n,k]$ pseudolinear codes where the parity check matrix $H$ (c.f. Definition \ref{def:plc}) is fixed and the generator matrix $G$ is chosen uniformly from $\{0,1\}^{\ell \times n}$.
\end{definition}

The following property of $F[n,Rn,R'n,k]$ is useful. 

\begin{lemma}[{\cite[Lemma 9.1]{Guruswami2001}}]
The codewords of $\mathscr{C}_n \sim F[n,Rn,R'n,k]$ are uniformly distributed over $\{0,1\}^n$ and are $k$-wise independent.
\end{lemma}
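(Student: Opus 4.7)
The plan is to deduce both parts of the lemma from a single structural fact about the uniform matrix $G$: for any vectors $\bm{h}_1,\ldots,\bm{h}_j \in \{0,1\}^\ell$ that are linearly independent over $\mathbb{F}_2$ (with $j \leq \ell$), the random vectors $\bm{h}_1 G, \ldots, \bm{h}_j G$ are mutually independent and each uniform on $\{0,1\}^n$ when $G \sim \mathrm{Unif}(\{0,1\}^{\ell \times n})$. I would prove this core fact by extending $\bm{h}_1,\ldots,\bm{h}_j$ to a basis of $\mathbb{F}_2^\ell$ and assembling the basis vectors as the rows of an invertible matrix $T \in \{0,1\}^{\ell \times \ell}$, with $\bm{h}_1,\ldots,\bm{h}_j$ as its first $j$ rows. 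Since $G \mapsto TG$ is a bijection on $\{0,1\}^{\ell\times n}$, the product $TG$ is uniform over that space, and hence its rows are i.i.d.\ uniform on $\{0,1\}^n$; in particular, the first $j$ rows $\bm{h}_1 G,\ldots,\bm{h}_j G$ are mutually independent and uniform.

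Next I would transfer this core fact to the codewords $\bm{x}(m,w) = \bm{h}(m,w)G$ via the algebraic structure of $H$. For uniformity, fix any $(m,w)\neq(0,0)$; then $\bm{h}(m,w)$ is a column of $H$. Since the underlying linear code has minimum distance at least $k+1 \geq 2$, no column of $H$ can be the zero vector, so $\bm{h}(m,w)$ is nonzero, i.e., trivially linearly independent by itself, and the core fact with $j=1$ yields that $\bm{x}(m,w)$ is uniformly distributed on $\{0,1\}^n$. For $k$-wise independence, take any $k$ distinct pairs $(m_1,w_1),\ldots,(m_k,w_k) \in \mathcal{M}\times\mathcal{W}$; the vectors $\bm{h}(m_1,w_1),\ldots,\bm{h}(m_k,w_k)$ are $k$ distinct columns of $H$ (with at most one being $\bm{0}$ if the pair $(0,0)$ appears). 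The key invocation is the classical fact that any $d-1$ columns of the parity-check matrix of a linear code with minimum distance $d$ are linearly independent over $\mathbb{F}_2$; since $d \geq k+1$ here, any $k$ columns of $H$ are linearly independent. Applying the core fact with $j=k$ then yields mutual independence of $\bm{x}(m_1,w_1),\ldots,\bm{x}(m_k,w_k)$.

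The main obstacle is bookkeeping rather than substance: verifying that $\ell \geq k$ for large enough $n$ (immediate since $\ell = \Theta(k(R+R')n)$ and $k$ is fixed), justifying the ``minimum distance implies linearly independent columns'' step in the standard way, and handling the degenerate case $(m_i,w_i)=(0,0)$ for some $i$. In the last case, $\bm{x}(0,0)=\bm{0}$ is a constant random variable, which is independent of everything, so I would simply drop that index and apply the core fact to the remaining $k-1$ columns of $H$, which are linearly independent by the same minimum-distance argument. Combining these steps gives both claims of the lemma.
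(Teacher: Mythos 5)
The paper does not supply a proof of this lemma; it is cited verbatim from Guruswami's thesis. Your reconstruction is correct and is the standard argument: a uniformly random $G \in \{0,1\}^{\ell \times n}$ sends any $\mathbb{F}_2$-linearly independent tuple $\bm{h}_1,\ldots,\bm{h}_j$ (with $j \le \ell$) to i.i.d.\ uniform vectors $\bm{h}_1 G,\ldots,\bm{h}_j G$ (your basis-extension / invertible-$T$ argument), and the condition ``minimum distance $\ge k+1$'' on the code with parity-check matrix $H$ is precisely the statement that every $k$ columns of $H$ are linearly independent; composing the two gives $k$-wise independence with uniform marginals. The bookkeeping you mention ($\ell = \Theta(kn) \ge k$ for large $n$, the min-distance/column-independence equivalence) is all routine and correctly identified.

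One point worth flagging, which your proof implicitly recognizes but the lemma statement in the paper glosses over: with the convention of Definition~\ref{def:plc} that $\bm{h}(0,0) = \bm{0}$, the codeword $\bm{x}(0,0) = \bm{0}\cdot G = \bm{0}$ is \emph{deterministic}, not uniform on $\{0,1\}^n$. Your uniformity argument correctly restricts to $(m,w) \ne (0,0)$, and your degenerate-case handling in the $k$-wise-independence part (dropping the index, noting a constant is independent of everything) is also correct, but the first clause of the lemma as written (``the codewords \ldots\ are uniformly distributed'') is literally false for that one pair. In Guruswami's original formulation this is avoided either by excluding the zero index or by adding a uniform offset $\bm{b}$ so that $\bm{x}(m,w) = \bm{h}(m,w)G + \bm{b}$; the paper's Definition~\ref{def:plc} drops the offset without remarking on this. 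The discrepancy is harmless downstream (a single fixed codeword among exponentially many does not affect the soft-covering or reliability union bounds in any material way), so this is a defect of the lemma's statement rather than of your proof, but it would be cleaner to state the uniformity claim with the exception $(m,w) \ne (0,0)$ explicit.
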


\subsection{Secrecy Analysis}

For a given $\mathcal{S} \in \mathscr{S}$, let $Q^{(\mathcal{S})}_{\bm{Z}}$ denote the PMF of the adversary's observation $\bm{Z} \in \{0,1\}^{rn}$ when Alice sends a \textit{random $n$-bit sequence} $\bm{X} \sim Q^n_X \triangleq \mathrm{Unif}(\{0,1\}^n)$ through the channel. We have that
\begin{equation} \label{eq:QZ}
Q^{(\mathcal{S})}_{\bm{Z}}(\bm{z}) = Q_{\bm{X}(\mathcal{S})}(\bm{z}) = Q^{rn}_X(\bm{z}), \text{ for all } \bm{z} \in \{0,1\}^{rn}.
\end{equation} 
Furthermore, for an $[n,Rn,R'n]$ code $\mathcal{C}_n$, let $P^{(\mathcal{C}_n,\mathcal{S})}_{M,\bm{Z}}$ denote the joint PMF of message $M$ and observation $\bm{Z}$ when Alice sends the codeword $\bm{x}(M,W,\mathcal{C}_n)$ through the channel. Then for marginal PMF $P_M \in \mathcal{P}(\mathcal{M})$,
\begin{align}
I_{\mathcal{C}_n}\left( M; \bm{Z} \right) &\triangleq D \left( P^{(\mathcal{C}_n,\mathcal{S})}_{M,\bm{Z}} || P_M P^{(\mathcal{C}_n,\mathcal{S})}_{\bm{Z}} \right) \nonumber \\
& \stackrel{(\text{a})}{=}D\left(P^{(\mathcal{C}_n,\mathcal{S})}_{M,\bm{Z}} || P_M Q^{(\mathcal{S})}_{\bm{Z}} \right) - D\left(P^{(\mathcal{C}_n,\mathcal{S})}_{\bm{Z}} || Q^{(\mathcal{S})}_{\bm{Z}} \right) \nonumber \\
& \stackrel{(\text{b})}{\leq} D\left(P^{(\mathcal{C}_n,\mathcal{S})}_{M,\bm{Z}} || P_M Q^{(\mathcal{S})}_{\bm{Z}} \right) \nonumber \\
& \leq \sum_{m \in \mathcal{M}} P_M(m) \max_{m' \in \mathcal{M}} D\left(P^{(\mathcal{C}_n,\mathcal{S})}_{\bm{Z}|M=m'} || Q^{(\mathcal{S})}_{\bm{Z}} \right) \nonumber \\
& = \max_{m \in \mathcal{M}} D\left(P^{(\mathcal{C}_n,\mathcal{S})}_{\bm{Z}|M=m} || Q^{(\mathcal{S})}_{\bm{Z}} \right) \label{eq:sa_1}
\end{align}
where (a) follows from the relative entropy chain rule and (b) follows from the property $D(\cdot || \cdot) \geq 0$. Thus,
\begin{align}
\mathrm{Sem}(\mathcal{C}_n) &= \max_{P_M \in \mathcal{P}(\mathcal{M}), \mathcal{S} \in \mathscr{S}} I_{\mathcal{C}_n}(M;\bm{Z}) \nonumber \\
& \stackrel{(\text{c})}{\leq} \max_{\mathcal{S} \in \mathscr{S}} \max_{m \in \mathcal{M}} D\left(P^{(\mathcal{C}_n,\mathcal{S})}_{\bm{Z}|M=m} || Q^{(\mathcal{S})}_{\bm{Z}} \right)  \nonumber \\
& \stackrel{(\text{d})}{=} \max_{\mathcal{S} \in \mathscr{S}} \max_{m \in \mathcal{M}} D\left(P^{(\mathcal{C}_n,\mathcal{S})}_{\bm{Z}|M=m} || Q^{rn}_{X} \right) \label{eq:sa_2}
\end{align}
where (c) follows from (\ref{eq:sa_1}) and (d) follows from (\ref{eq:QZ}).

  \begin{figure}[t]
  \includegraphics[width=\columnwidth]{./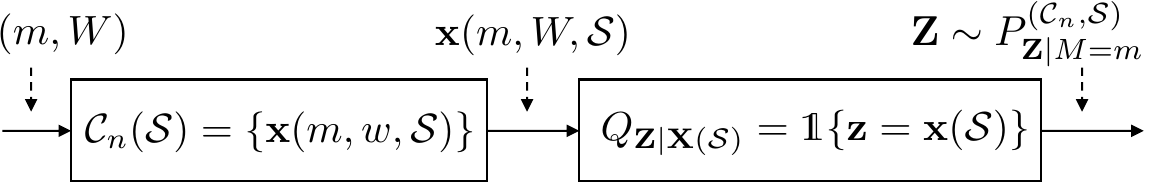}
  \caption{A mapping of the quantities in (\ref{eq:sa_2}) to the soft-covering problem of Fig. \ref{fig:coding_prob_0}.}
  \label{fig:coding_problem}
  \end{figure}

Consider the relative entropy $D \left( P^{(\mathcal{C}_n,\mathcal{S})}_{\bm{Z}|M=m} || Q^{rn}_X \right)$ in the framework of the soft-covering lemma for $k$-wise independent codewords (Lemma \ref{thm:sscl_LI}), as illustrated in Fig. \ref{fig:coding_problem}. Here, $(m,W)$ is uniformly drawn from a message-key product set $\{m\} \times \mathcal{W}$ of rate $R'/r$, i.e., $|\{m\} \times \mathcal{W}| = 2^{R'n} = 2^{rn \frac{R'}{r}}$. Since rate $R' / r = (r+\epsilon')/r$ is greater than $I(X;Z) = 1$, it follows from Lemma \ref{thm:sscl_LI} that there exists $\gamma_0>0$ and $\gamma_1>0$ such that for even integer $k\geq 4$ and large enough $n$,
\begin{equation} \label{eq:sa_3}
\mathbb{P}_{\mathscr{C}_n} \left( D \left( P^{(\mathscr{C}_n,\mathcal{S})}_{\bm{Z}|M=m} || Q^{rn}_X \right) > 2^{-\gamma_1 rn} \right) \leq 2^{(-k \gamma_0 +1)rn}.
\end{equation}
In turn,
\begin{align}
&\mathbb{P}_{\mathscr{C}_n} \left( \mathrm{Sem}(\mathscr{C}_n) > 2^{-\gamma_1 rn} \right) \nonumber \\
&\stackrel{(\text{e})}{\leq} \mathbb{P}_{\mathscr{C}_n} \left( \max_{\mathcal{S} \in \mathscr{S}} \max_{m \in \mathcal{M}} D\left(P^{(\mathscr{C}_n,\mathcal{S})}_{\bm{Z}|M=m} || Q^{rn}_{X} \right) > 2^{-\gamma_1 r n} \right) \nonumber \\
& \leq \mathbb{P}_{\mathscr{C}_n} \left( \bigcup_{\mathcal{S} \in \mathscr{S}} \bigcup_{m \in \mathcal{M}}  \left\{ D\left(P^{(\mathscr{C}_n,\mathcal{S})}_{\bm{Z}|M=m} || Q^{rn}_{X} \right) > 2^{-\gamma_1 r n} \right\}  \right) \nonumber \\
& \stackrel{(\text{f})}{\leq} 2^{(-k \gamma_0 r + r + R+1)n} \label{eq:secure}
\end{align}
where (e) follows from (\ref{eq:sa_2}), and (f) follows for large enough $n$ from a simple union bound, the inequality $|\mathscr{S}| = {n \choose rn} \leq 2^{n}$ and (\ref{eq:sa_3}).

\subsection{Reliability Analysis}

 Unlike the above secrecy analysis, the reliability analysis requires additional structure of the code $\mathscr{C}_n$ beyond the $k$-wise independence property. In particular, we will use the pseudolinear structure of $\mathscr{C}_n$. We restate a reliability result of \cite{Ruzomberka2023} without proof. For a code $\mathcal{C}_n$ and a message $m \in \mathcal{M}$, define the probability of decoding error conditioned on $M=m$ as $$P^{(m)}_{\mathrm{error}}(\mathcal{C}_n) \triangleq \mathbb{P}(\widehat{M} \neq m | M=m)$$
 where the probability is w.r.t. $W \sim \mathrm{Unif}(\mathcal{W})$ and the adversary's choice of bit read/flip locations.

 \begin{lemma}[{\cite[Theorem 1]{Ruzomberka2023}}] \label{thm:Ruz2023}
 Suppose that $p \in (0,1/2)$ and $r< 1 - H_2(p)$. If the key rate $R' > r$ and the sum rate $R+R' < 1 - H_2(p)$, then for large enough (but fixed) $k$ and any fixed $\delta>0$, there exists $\gamma_2>0$ such that for large enough $n$ and any $m \in \mathcal{M}$,
 \begin{equation} 
 \mathbb{P}_{\mathscr{C}_n} \left( P^{(m)}_{\mathrm{error}}(\mathscr{C}_n) > \delta \right) \leq 2^{-k\gamma_2 n}.
 \end{equation}
 \end{lemma}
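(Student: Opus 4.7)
Since Lemma \ref{thm:Ruz2023} is cited from \cite{Ruzomberka2023}, I describe how I would reprove it using only the tools developed above. The plan is to fix a transmitted message $m\in\mathcal{M}$ and decompose the analysis of $P^{(m)}_{\mathrm{error}}(\mathscr{C}_n)$ into three layers: (i) fix a deterministic adversary attack, parameterized by a read set $\mathcal{S}\in\mathscr{S}$ and a bit-flip rule $\phi$ mapping an $rn$-bit observation to an $n$-bit error of Hamming weight at most $pn$; (ii) for each such attack, show the fraction of keys $w$ that decode incorrectly stays below $\delta$ with probability at least $1-2^{-k\gamma n}$ over $\mathscr{C}_n$; (iii) union bound over a small covering of the attack space.

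For step (ii), for fixed $(\mathcal{S},\phi,w)$ decoding fails iff some alternative pair $(m',w')$ with $m'\neq m$ satisfies $d_H(\bm{x}(m,w)+\phi(\bm{x}(m,w,\mathcal{S})),\bm{x}(m',w'))\leq pn$. Since pseudolinear codewords are uniformly distributed and pairwise (hence $k$-wise for $k\geq 2$) independent, the probability over $\mathscr{C}_n$ that a fixed alternative codeword falls within Hamming distance $pn$ of a fixed target is at most $2^{(H_2(p)-1)n + o(n)}$ by the standard Hamming-ball volume bound. Summing over the at most $2^{(R+R')n}$ alternatives and using $R+R'<1-H_2(p)$, the expected bad-key fraction is $2^{-\eta n}$ for some $\eta>0$. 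I would then apply a Bellare--Rompel / Schmidt--Siegel--Srinivasan concentration bound (in the spirit of Section \ref{sec:soft_covering}) to the sum of $k$-wise independent bad-key indicators, concluding that for each fixed attack the bad-key fraction exceeds $\delta$ with probability at most $2^{-k\gamma n}$, for some $\gamma>0$ depending on the rate gap.

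The main obstacle is step (iii), because $\phi$ may depend on $\mathscr{C}_n$ and a priori lives in a doubly exponential class. The workaround exploits that, once the read value $\bm{z}=\bm{x}(m,w,\mathcal{S})$ is specified, the adversary's effective action on $w$ reduces to a single error vector $\bm{e}=\phi(\bm{z})$ of weight at most $pn$. I would therefore partition the key set into blocks indexed by triples $(\mathcal{S},\bm{z},\bm{e})$, of which there are at most $2^n\cdot 2^{rn}\cdot 2^{H_2(p)n+o(n)}$, and rerun the per-block concentration argument. The condition $R'>r$ guarantees that each typical block contains roughly $2^{(R'-r)n}$ keys, which is large enough for the $k$-wise concentration tools to remain effective at the block level. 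Choosing $k$ so that $k\gamma$ exceeds $1+r+H_2(p)$, a union bound over the $(\mathcal{S},\bm{z},\bm{e})$ triples yields the desired $2^{-k\gamma_2 n}$ tail bound with some $\gamma_2>0$. I expect verifying that the concentration inequality survives the block-wise refinement (i.e., holds \emph{uniformly} over $(\mathcal{S},\bm{z})$ with a block-conditional expectation that still beats $\delta$) to be the most delicate part; steps (i) and (ii) essentially recycle the $k$-wise concentration machinery already displayed in the proof of Lemma \ref{thm:sscl_LI}.
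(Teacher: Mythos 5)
First, note that the paper does not prove this lemma at all: it is imported verbatim from \cite{Ruzomberka2023} with the explicit remark ``we restate a reliability result \ldots without proof,'' so there is no in-paper argument to compare against. Your overall architecture --- fix $m$, reduce the adversary to deterministic attacks, collapse the doubly-exponential class of flip rules $\phi$ into blocks indexed by triples $(\mathcal{S},\bm{z},\bm{e})$ of which there are only $2^{(1+r+H_2(p))n+o(n)}$, prove per-attack concentration of the bad-key fraction, and union bound --- is the right one and is essentially how the cited reference proceeds. The rate conditions are also deployed correctly: $R+R'<1-H_2(p)$ drives the expected bad-key fraction to $2^{-\eta n}$, and $R'>r$ keeps the blocks large enough to concentrate.

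There is, however, a genuine gap in your step (ii). The bad-key indicators $B_w=\mathds{1}\{\exists\,(m',w'),\,m'\neq m:\ d_H(\bm{x}(m',w'),\bm{x}(m,w)+\bm{e})\leq pn\}$ are \emph{not} $k$-wise independent: each $B_w$ depends on the entire code through the existential quantifier over alternative codewords, not merely on $\bm{x}(m,w)$, so distinct $B_w$'s share the same witnesses and are (positively) correlated. Consequently Lemma \ref{thm:schmidt} and Lemma \ref{thm:Bellare} cannot be applied to $\sum_w B_w$ as you propose --- this is precisely the point where the reliability analysis is harder than the secrecy analysis of Section \ref{sec:soft_covering}, where the summands genuinely are functions of single codewords. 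One cannot repair this by conditioning on the sub-code $\{\bm{x}(m',w')\}_{m'\neq m}$ either, since conditioning a $k$-wise independent ensemble on exponentially many codewords destroys the uniformity and independence of the rest. The standard fix (used in the Guruswami--Indyk list-decoding analyses and in \cite{Ruzomberka2023}) is a $k$-th moment counting argument: bound $\mathbb{P}(\sum_w B_w\geq \delta|\mathcal{W}|)$ by the expected number of $k$-tuples of bad keys, each equipped with a witness codeword; every such tuple involves at most $2k$ distinct codewords, so $2k$-wise independence factors the expectation and yields the $2^{-k\gamma_2 n}$ tail. With that substitution (and the block-size concentration you already flag), your outline goes through.
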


 We apply Lemma \ref{thm:Ruz2023} to bound the maximum probability of error $P^{\mathrm{max}}_{\mathrm{error}}(\mathscr{C}_n) \triangleq \max_{m \in \mathcal{M}} P^{(m)}_{\mathrm{error}} (\mathscr{C}_n)$. Note that our choice of $\epsilon$ and $\epsilon'$ ensures that $R'>r$ and $R+R' < 1-H_2(p)$. Also, we have that $R < 1 - H_2(p) - r$. Thus, for $\delta > 0$,
 \begin{align}
 \mathbb{P}_{\mathscr{C}_n} \left( P^{\mathrm{max}}_{\mathrm{error}(\mathscr{C}_n)} > \delta \right) &\triangleq \mathbb{P}_{\mathscr{C}_n} \left( \max_{m \in \mathcal{M}} P^{(m)}_{\mathrm{error}}(\mathscr{C}_n) > \delta \right) \nonumber \\
 & \stackrel{(\text{g})}{\leq} \sum_{m \in \mathcal{M}} \mathbb{P}_{\mathscr{C}_n} \left( P^{(m)}_{\mathrm{error}}(\mathscr{C}_n) > \delta \right) \nonumber \\
 & \stackrel{(\text{h})}{\leq} 2^{(-k \gamma_2  + 1 - H_2(p) - r)n} \label{eq:reliable}
 \end{align}
 where (g) follows from a union bound and (h) follows for large enough $k$ and for large enough $n$ via Lemma \ref{thm:Ruz2023}.

 \subsection{Combining Secrecy and Reliability Analysis}

 To complete the proof, we combine the secrecy and reliability analysis. For large enough $k$ and $k$ even, and for large enough $n$,
 \begin{align}
 &\mathbb{P}_{\mathscr{C}_n}( \{\mathrm{Sem}(\mathscr{C}_n) > 2^{-\gamma r n}\} \cup \{ P^{\mathrm{max}}_{\mathrm{error}}(\mathscr{C}_n) > \delta \}) \nonumber \\
 &\leq 2^{(-k \gamma_0 r + 2r + R)n} + 2^{(-k \gamma_2  + 1 - H_2(p) - r)n} \label{eq:last_eq}
 \end{align}
 following both (\ref{eq:secure}), (\ref{eq:reliable}) and a simple union bound. In summary, for large enough $k$ and $k$ even (which is constant in $n$) and large enough $n$, we have that (\ref{eq:last_eq}) is less than $1$, and in turn, there exists an $[n,Rn,R'n,k]$ pseudolinear code $\mathcal{C}_n$ such that $\mathrm{Sem}(\mathcal{C}_n) \leq 2^{-\gamma_1 r n}$ and $P^{\mathrm{max}}_{\mathrm{error}}(\mathcal{C}_n) \leq \delta$.

 \section{Conclusion}

 We showed that random pseudolinear codes achieve the best known lower bound of the semantic secrecy capacity of the binary adversarial wiretap channel of type II. A necessary condition on the non-linearity of a capacity achieving code was also shown. One possible avenue for future research is to apply further derandomization techniques to our random codes, e.g., in the spirit of \cite{Guruswami2022}. The goal here is to replace random pseudolinear codes with a significantly derandomized class that can maintain the same error-correction and secrecy power while being more amendable to efficient decoding algorithms.
 
 \ifthenelse{\boolean{extend_v}}{ 
 \appendices

 \section{Linear Coset Coding Schemes} \label{sec:LCCS}

 In this appendix, we prove that the linear coset coding scheme of Ozarow and Wyner \cite{Ozarow1986} is not semantically-secret for any positive message rate. We first define coset coding.

The linear coset coding scheme, proposed in \cite{Ozarow1986}, is as follows: Let $R>0$ be the message rate. For blocklength $n$, let $H$ be the $Rn \times n$ parity check matrix of some $[n,n-Rn]$ binary linear code. \textit{Encoding:} Suppose that Alice wants to transmit a message $m \in \{0,1\}^{Rn}$. Alice encodes $m$ by choosing the $n$ bit codeword $\bm{x}$ randomly and uniformly from the set of solutions $\{\bm{x}' \in \{0,1\}^n: \bm{x}' H^T = \bm{m}$\} and transmits $\bm{x}$ over the (noiseless) $(0,r)$-AWTC II. \textit{Decoding:} Upon receiving $\bm{x}$, Bob performs decoding by choosing the message estimate $\widehat{m} = \bm{x}H^T$. It is easy to show that the above linear coset coding scheme is an $[n,Rn,(1-R)n]$ linear code.

We prove the following result.
\begin{lemma} \label{thm:LCCS}
Let rate $R>0$. For large enough $n$, any $[n,Rn,(1-R)n]$ binary code $\mathcal{C}_n$ that is a linear coset coding scheme has semantic leakage $\mathrm{Sem}(\mathcal{C}_n) \geq 1$.
\end{lemma}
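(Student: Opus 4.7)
\emph{Proof plan.} My plan is to specialize the converse template used in the proof of Theorem \ref{thm:linear_codes} to the coset coding scheme, exploiting the fact that here the key rate is forced to be $R' = 1 - R$ so that the sum rate satisfies $R + R' = 1 > r$. The defining identity $\bm{x}(m,w) H^T = m$ together with uniformity of the key forces the generator decomposition $G = \begin{bmatrix} G_M \\ G_W \end{bmatrix}$ to satisfy $G_M H^T = I$ and $G_W H^T = 0$; equivalently, the rows of $G_W$ form a basis of $C \triangleq \ker H^T$, and $G$ is an invertible $n \times n$ matrix, so Assumption \ref{ass:rank} is automatic and every column of $G$ is linearly independent.

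Next, I would have the adversary mimic the attack in the proof of Theorem \ref{thm:linear_codes} with $\mathcal{V} = [n]$: choose $\mathcal{S}^* \subseteq [n]$ of size $rn$ minimizing $\mathrm{rank}(G_W(\mathcal{S}^*))$. For uniformly distributed $M$, Lemma \ref{thm:rank} then gives $I_{\mathcal{S}^*}(M;\bm{Z}) = rn - \mathrm{rank}(G_W(\mathcal{S}^*))$, so it suffices to produce some $\mathcal{S}^*$ with $\mathrm{rank}(G_W(\mathcal{S}^*)) \leq rn - 1$. Viewing $G_W$ as a parity-check matrix of the $[n, Rn]$ dual code $C^\perp$ (whose generator is $H$), its minimum distance $d^{\perp}_{\mathrm{min}}$ equals the smallest number of linearly dependent columns of $G_W$. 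Hence if $d^{\perp}_{\mathrm{min}} \leq rn$, the adversary can pick $\mathcal{S}^*$ to contain the support of a minimum-weight codeword of $C^\perp$, so that $\mathrm{rank}(G_W(\mathcal{S}^*)) \leq rn - 1$ and consequently $\mathrm{Sem}(\mathcal{C}_n) \geq I_{\mathcal{S}^*}(M;\bm{Z}) \geq 1$.

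The bound $d^{\perp}_{\mathrm{min}} \leq rn$ would then be delivered by applying the extended Plotkin bound (Lemma \ref{thm:Plotkin}) to $C^\perp$, which has rate $R$: this yields $d^{\perp}_{\mathrm{min}} \leq (1-R)n/2 + o(n)$, strictly below $rn$ for large enough $n$ provided $R > 1 - 2r$. This last step is the one I expect to be the main obstacle: the Plotkin inequality covers the entire range $R > 0$ only once $r \geq 1/2$, whereas for $r < 1/2$ the low-rate regime $R \leq 1 - 2r$ is not handled directly -- indeed, a Gilbert--Varshamov construction produces matrices $H$ whose row code $C^\perp$ has $d^{\perp}_{\mathrm{min}} > rn$. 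Closing this remaining gap likely requires either a sharper distance bound on $C^\perp$ (for instance via the Elias--Bassalygo bound mentioned after Theorem \ref{thm:linear_codes}) or augmenting the attack with a carefully chosen non-uniform $P_M$ that exploits the specific structure of $H$.
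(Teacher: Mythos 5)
Your reduction is essentially the paper's own: both arguments boil down to showing that the rate-$R$ code spanned by the rows of $H$ (your $C^{\perp}$) has a nonzero codeword of weight at most $rn$. The paper reaches that point via Ozarow--Wyner's equivocation identity $\min_{\mathcal{S}} H(M|\bm{Z}) = \min_{|\mathcal{I}|=(1-r)n}\mathrm{rank}(H(\mathcal{I}))$ rather than via Lemma~\ref{thm:rank}, but the two routes are equivalent, since $\mathrm{rank}(H(\mathcal{I}))<Rn$ for some $\mathcal{I}$ of size $(1-r)n$ exactly when some nonzero row-combination of $H$ is supported on the complementary $rn$ positions. Your setup (the rows of $G_W$ span $\ker H^T$, $G$ invertible, hence $\mathrm{rank}(G(\mathcal{S}^*))=rn$) is correct, as is the identification of $d^{\perp}_{\mathrm{min}}$ with the smallest number of dependent columns of $G_W$.

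The obstacle you flag at the end is genuine, and you should trust your own accounting over the paper's. The paper's proof asserts that $\Delta=Rn$ forces the $[n,(1-R)n]$ code with parity-check matrix $H$ to have minimum distance at least $Rn+1$, and then applies Plotkin at rate $1-R$ to contradict this for every $R>0$. That implication does not hold: ``every $(1-r)n$ columns of $H$ have rank $Rn$'' does not imply ``every $Rn$ columns of $H$ are independent'' (an $Rn\times n$ matrix can repeat a column while every sufficiently large subset of columns remains full row rank). The correct consequence of $\Delta=Rn$ is the one you derived, namely $d^{\perp}_{\mathrm{min}}>rn$ for the rate-$R$ row code of $H$, and Plotkin applied to that code yields a contradiction only when $R>1-2r$ (Elias--Bassalygo extends this to $R>1-H_2\bigl((1-\sqrt{1-2r})/2\bigr)$). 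Your Gilbert--Varshamov remark shows the statement cannot be rescued in the remaining regime: if $H$ generates a rate-$R$ code with relative distance exceeding $r$ (which exists for $r<1/2$ and $R<1-H_2(r)$), then every $(1-r)n$-column submatrix of $H$ has full row rank, so $\bm{Z}$ is uniform on $\{0,1\}^{rn}$ and independent of $M$ for every $\mathcal{S}$ and every $P_M$, giving $\mathrm{Sem}(\mathcal{C}_n)=0$. So your argument is complete precisely on the range where the lemma is provable; the restriction $R>1-2r$ is harmless for the purpose the lemma serves in the paper (capacity-approaching rates satisfy $R\approx 1-r>1-2r$), but it must be stated, and no sharper distance bound or choice of $P_M$ will remove it.
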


For any $R>0$, let $\mathcal{C}_n$ be an $[n,Rn,(1-R)n]$ binary code that is a linear coset coding scheme and let $H$ be the corresponding $Rn \times n$ parity check matrix. Suppose that Alice's message is uniformly distributed over $\{0,1\}^n$. To prove Lemma \ref{thm:LCCS}, we will use the following result due to Ozarow and Wyner.
\begin{lemma}[{\cite[Lemma 4]{Ozarow1986}}] \label{thm:Ozarow4} For an index set $\mathcal{I} \subseteq [n]$, let $H(\mathcal{I})$ denote the $|\mathcal{I}|$ columns of $H$ indexed by $\mathcal{I}$. The adversary's equivocation is
\begin{equation}
\Delta \triangleq \min_{\mathcal{S} \in \mathscr{S}} H(M|\bm{Z}) = \min_{\mathcal{I} \subseteq [n]: |\mathcal{I}| = (1-r)n} \mathrm{rank} \left( H(\mathcal{I}) \right).
\end{equation}
\end{lemma}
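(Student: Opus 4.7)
The plan is to compute $H(M \mid \bm{Z})$ exactly for every fixed $\mathcal{S}$ and observe that the value depends on $\mathcal{S}$ only through the rank of $H(\mathcal{I})$, where $\mathcal{I} = [n] \setminus \mathcal{S}$. Once that identity is established, minimizing over $\mathcal{S} \in \mathscr{S}$ is the same as minimizing over size-$(1-r)n$ subsets $\mathcal{I}$, which yields the claim.

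First I would recast the coset encoder in a form convenient for computing conditional entropy. Since every coset $\{\bm{x}' \in \{0,1\}^n : \bm{x}' H^T = m\}$ has the same cardinality $2^{(1-R)n}$, drawing $M \sim \mathrm{Unif}(\{0,1\}^{Rn})$ and then picking $\bm{X}$ uniformly from the corresponding coset produces the same joint distribution as drawing $\bm{X} \sim \mathrm{Unif}(\{0,1\}^n)$ and setting $M = \bm{X} H^T$. Under this reformulation, the coordinates $X_1,\dots,X_n$ are i.i.d.\ $\mathrm{Bern}(1/2)$, so $\bm{Z} = \bm{X}_{\mathcal{S}}$ is uniform on $\{0,1\}^{rn}$ and $\bm{X}_{\mathcal{I}}$ is independent of $\bm{Z}$ and uniform on $\{0,1\}^{(1-r)n}$.

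Second I would use linearity to split $M = \bm{Z}\, H(\mathcal{S})^T + \bm{X}_{\mathcal{I}}\, H(\mathcal{I})^T$. For any $\bm{z} \in \{0,1\}^{rn}$, the conditional distribution $P_{M \mid \bm{Z} = \bm{z}}$ is therefore the pushforward of $\mathrm{Unif}(\{0,1\}^{(1-r)n})$ under the affine $\mathbb{F}_2$-map $\bm{u} \mapsto \bm{z} H(\mathcal{S})^T + \bm{u} H(\mathcal{I})^T$. The image is the coset $\bm{z} H(\mathcal{S})^T + \mathrm{rowspan}(H(\mathcal{I})^T)$, and by the rank-nullity theorem the map is $2^{(1-r)n - \mathrm{rank}(H(\mathcal{I}))}$-to-one onto this coset. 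Hence $P_{M \mid \bm{Z} = \bm{z}}$ is uniform on a set of size $2^{\mathrm{rank}(H(\mathcal{I}))}$, giving $H(M \mid \bm{Z} = \bm{z}) = \mathrm{rank}(H(\mathcal{I}))$ for every $\bm{z}$, and therefore $H(M \mid \bm{Z}) = \mathrm{rank}(H(\mathcal{I}))$.

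Third, since the complementation map $\mathcal{S} \mapsto \mathcal{I} = [n] \setminus \mathcal{S}$ is a bijection between size-$rn$ and size-$(1-r)n$ subsets of $[n]$, minimizing $H(M \mid \bm{Z})$ over $\mathcal{S} \in \mathscr{S}$ is the same as minimizing $\mathrm{rank}(H(\mathcal{I}))$ over $\{\mathcal{I} \subseteq [n] : |\mathcal{I}| = (1-r)n\}$, yielding $\Delta = \min_{|\mathcal{I}| = (1-r)n} \mathrm{rank}(H(\mathcal{I}))$. The only real subtlety, and thus the ``hard'' step, is the linear-algebraic identification in the second paragraph: once one recognizes that $P_{M \mid \bm{Z} = \bm{z}}$ is uniform on an affine subspace of dimension $\mathrm{rank}(H(\mathcal{I}))$, the conditional entropy read-off is immediate and the rest is bookkeeping.
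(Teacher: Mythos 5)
Your proof is correct. Note that the paper itself gives no proof of this statement --- it is imported verbatim from Ozarow and Wyner as \cite[Lemma 4]{Ozarow1986} and used as a black box in Appendix A --- so there is no internal argument to compare against; your derivation (recasting the coset encoder as $\bm{X}\sim\mathrm{Unif}(\{0,1\}^n)$ with $M=\bm{X}H^T$, splitting $M=\bm{Z}H(\mathcal{S})^T+\bm{X}_{\mathcal{I}}H(\mathcal{I})^T$, and reading off that $P_{M\mid\bm{Z}=\bm{z}}$ is uniform on a coset of dimension $\mathrm{rank}(H(\mathcal{I}))$) is essentially the standard argument behind the original lemma. The only point worth making explicit is that your first step silently uses that $H$ has full row rank $Rn$ (so every syndrome coset is nonempty and of size $2^{(1-R)n}$, making the two samplings of $(M,\bm{X})$ agree); this is guaranteed here because $H$ is a parity check matrix of an $[n,(1-R)n]$ code, but it deserves a sentence.
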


Recall the following definitional inequalities: 
\begin{align}
\mathrm{Sem}(\mathcal{C}_n) &\geq \max_{\mathcal{S} \in \mathscr{S}} I_{\mathcal{S}}(M;\bm{Z}) \nonumber \\
& = H(M) - \min_{\mathcal{S} \in \mathscr{S}} H(M|\bm{Z}) = Rn - \Delta. \nonumber
\end{align}
Thus, to show that $\mathrm{Sem}(\mathcal{C}_n) \geq 1$ for large enough $n$, it is sufficient to show that $\Delta \leq Rn -1$.

Let $n$ be large enough and suppose by contradiction that $\Delta = Rn$. By Lemma \ref{thm:Ozarow4}, we have that $\mathrm{rank}(H(\mathcal{I})) = Rn$ for every set $\mathcal{I} \subseteq [n]$ s.t. $|\mathcal{I}| = (1-r)n$. This in turn by the definition of $H$ implies that the $[n,(1-R)n]$ binary code with parity check matrix $H$ has minimum distance, denoted $d_{\mathrm{min}}$, of at least $Rn+1$. However, by the Plotkin bound of Lemma \ref{thm:Plotkin}, we have that $1-R \leq 1 - 2 \frac{d_{\mathrm{min}}}{n} + o(1)$, or equivalently, $d_{\mathrm{min}} \leq \frac{Rn}{2} + o(n)$. Thus, for $n$ large enough such that the $o(n)$ term is negligible, we have a contradiction. This completes the proof of Lemma \ref{thm:LCCS}.

 \section{Discussion of Assumption \ref{ass:rank}} \label{sec:assrank_proof}

 We show that if the generator matrix $G$ of an $[n,Rn]$ linear code $\mathcal{C}_n$ is not full rank, then either the probability of decoding error is large such that $P^{\mathrm{max}}_{\mathrm{error}}(\mathcal{C}_n) \geq 1/2$ or both $\mathcal{W}$ and $G$ can be replaced with a smaller key set $\mathcal{W}'$ and generator matrix $G'$, respectively, without changing the code. Let $\mathcal{C}_n$ be an $[n,Rn]$ linear code and suppose that $G$ is not full rank.
 
 Suppose that $G_W$ is full rank. Since the channel is noiseless, Bob's received sequence is guaranteed to be a codeword in $\mathcal{C}_n$. Suppose that Bob receives the codeword $\bm{c} \in \mathcal{C}_n$. From Bob's perspective, the set of all possible message-key pairs that Alice could have sent is
 \begin{align}
 \mathcal{M}_{\bm{c}} &= \{ (m,w) \in \mathcal{M} \times \mathcal{W}: \begin{bmatrix} m & w \end{bmatrix} G = \bm{c}\} \nonumber \\
  &= \{ (m,w) \in \mathcal{M} \times \mathcal{W}: m G_M + w G_W = \bm{c}\}. \nonumber
 \end{align}
 Since the mapping $G:\{0,1\}^{(R+R')n} \rightarrow \{0,1\}^{n}$ is a linear transformation, the number of pairs in $\mathcal{M}_{\bm{c}}$ is $|\mathcal{M}_{\bm{c}}|= 2^{\mathrm{nullity}(G)} = 2^{(R+R')n - \mathrm{rank}(G)}$ where the second equality follows from the rank-nullity theorem. In turn, since $\mathrm{rank}(G) < (R+R')n$, it follows that $|\mathcal{M}_{\bm{c}}| \geq 2$. Now consider two unique pairs in $\mathcal{M}_{\bm{c}}$, say $(m_1,w_1)$ and $(m_2,w_2)$. We show that $m_1 \neq m_2$ by considering 2 cases. (Case 1): Suppose that $w_1 = w_2$. Then $m_1 \neq m_2$ by the uniqueness of the pairs. Done. (Case 2): Suppose instead that $w_1 \neq w_2$. Since $G_W$ is full rank, we have that $(w_1+w_2)G_W \neq 0$. In turn, $[m_1 w_1]G = [m_2 w_2]G$ implies that $(m_1+m_2) G_M = (w_1+w_2) G_W \neq 0$, and thus, $m_1 \neq m_2$. Done. In summary, upon receiving $\bm{c}$, Bob finds that at least 2 messages could be Alice's message. Thus, for PMFs $P_M = \mathrm{Unif}(\mathcal{M})$ and $P_W = \mathrm{Unif}(\mathcal{W})$,
 \begin{align}
 &P^{\mathrm{max}}_{\mathrm{error}}(\mathcal{C}_n) \geq \mathbb{P}_{(M,W) \sim P_M P_W} \left( \widehat{M} \neq M \right) \nonumber \\
 & = \sum_{\bm{c} \in \mathcal{C}_n} \mathbb{P}_{(M,W) \sim P_M P_W} \left( \widehat{M} \neq M \Big| \text{Bob RXs }\bm{c} \right) \frac{1}{|\mathcal{C}_n|} \nonumber \\
 & \geq 1/2. \nonumber
 \end{align}

 Suppose instead that $G_W$ is not full rank. Then each $(R'n)$-bit sequence in the rowspace of $G_W$ corresponds to multiple (i.e., redundant) keys in $\mathcal{W}$. Hence, we can eliminate this redundancy by shortening the key $w$ from $R'n$ bits to $\mathrm{rank}(G_W)$ bits and replacing $G_W$ with full rank matrix $G'_W$ that has $\mathrm{rowspace}(G'_W) = \mathrm{rowspace}(G_W)$ without changing the code $\mathcal{C}_n$.
 }{}

\bibliographystyle{IEEEtran}
\bibliography{refs}

%
\IEEEpeerreviewmaketitle

\end{document}